\newtheorem{theorem}{Theorem}
\newtheorem{lemma}{Lemma}
\newtheorem{remark}{Remark}
\newtheorem{definition}{Definition}
\newtheorem{corollary}{Corollary}
\newtheorem{proposition}{Proposition}
\theoremstyle{plain}
\begin{document}

\title{Scalable Uplink Signal Detection in C-RANs via Randomized Gaussian Message Passing}
%\author{\IEEEauthorblockN{Congmin Fan,~\IEEEmembership{Student Member, IEEE}, Xiaojun Yuan,~\IEEEmembership{Senior Member, IEEE}, and Ying Jun (Angela) Zhang,~\IEEEmembership{Senior Member,~IEEE}}\\
\author{Congmin Fan, Xiaojun Yuan, Ying Jun (Angela) Zhang
\IEEEcompsocitemizethanks{
\IEEEcompsocthanksitem
This work has been presented in part at the 2015 IEEE GLOBECOM, San Diego, USA, and the 2016 IEEE ICC, Kuala Lumpur, Malaysia. This work was supported in part by General Research Funding (Project number 14209414) from the Research Grants Council of Hong Kong and by the National Basic Research Program (973 program Program number 2013CB336701). The work of X. Yuan was supported in part by the China Recruitment Program of Global Young Experts.
\IEEEcompsocthanksitem
C. Fan are with the Department of Information Engineering, The Chinese University of Hong Kong.
\IEEEcompsocthanksitem
X. Yuan was with the School of Information Science and Technology, ShanghaiTech University, Shanghai, China. He is now with the National Key Laboratory of Science and Technology on Communications, the University of Electronic Science and Technology of China, Chengdu, China.
\IEEEcompsocthanksitem
Y. J. Zhang is with the Department of Information Engineering, The Chinese University of Hong Kong. She is also with the Institute of Network Coding (Shenzhen), Shenzhen Research Institute, The Chinese University of Hong Kong.}
}

\maketitle
%\vspace{-1.8cm}

\begin{abstract}
Cloud Radio Access Network (C-RAN) is a promising architecture for unprecedented capacity enhancement in next-generation wireless networks thanks to the centralization and virtualization of base station processing. However, centralized signal processing in C-RANs involves high computational complexity that quickly becomes unaffordable when the network grows to a huge size. Among the first, this paper endeavours to design a \textit{scalable} uplink signal detection algorithm, in the sense that both the complexity per unit network area and the total computation time remain constant when the network size grows. To this end, we formulate the signal detection in C-RAN as an inference problem over a bipartite random geometric graph. By passing messages among neighboring nodes, message passing (a.k.a. belief propagation) provides an efficient way to solve the inference problem over a sparse graph. However, the traditional message-passing algorithm is not guaranteed to converge, because the corresponding bipartite random geometric graph is locally dense and contains many short loops. As a major contribution of this paper, we propose a randomized Gaussian message passing (RGMP) algorithm to improve the convergence. Instead of exchanging messages simultaneously or in a fixed order, we propose to exchange messages asynchronously in a random order. The proposed RGMP algorithm demonstrates significantly better convergence performance than conventional message passing. The randomness of the message updating schedule also simplifies the analysis, and allows the derivation of the convergence conditions for the RGMP algorithm. In addition, we generalize the RGMP algorithm to a blockwise RGMP (B-RGMP) algorithm, which allows parallel implementation. The average computation time of B-RGMP remains constant when the network size increases. 

{\bf Keywords: }C-RAN; scalable signal processing; message passing; belief propagation 
\end{abstract}

%\newpage
\section{Introduction}
\subsection{Background and Motivations}
Cloud Radio Access Networks (C-RANs) have drawn considerable attention for their potential to sustain the explosive traffic demand in wireless communications. Unlike traditional cellular networks, a C-RAN splits the low-cost and light-weighted remote radio heads (RRHs) from the baseband processing units (BBUs), and merges the latter into a data center. The RRHs and BBUs are connected by a low-latency, high-bandwidth fiber network. The special architecture of C-RAN allows full-scale RRH coordination, which enables flexible interference management, dynamic resource allocation, and collaborative radio technology. This consequently leads to significant capacity enhancement. The full-scale coordination, however, also introduces a severe complexity issue. The state-of-the-art C-RAN technology is able to support thousands of RRHs \cite{mobile2011c}. Full-scale RRH coordination over such a large network involves prohibitively high computational complexity. For example, the linear minimum mean square error (MMSE) detector requires cubic complexity in the network size (in terms of the number of RRHs), or equivalently a quadratic complexity normalized by the number of RRHs\cite{tuchler2002minimum}. This implies that the detection complexity quickly becomes unaffordable as the network size grows. As such, a main challenge of C-RAN is to design \textit{scalable} coordination algorithms, where \textit{scalable} means: 1) the performance is near the optimum performance of full-scale RRH coordination, 2) the normalized computational complexity per RRH does not grow with the network size, or equivalently, the total computational complexity grows linearly with the network size, 3) with parallel implementation, the total computation time remains constant when the network size grows.
\par 
In a C-RAN, users and RRHs are scattered over a large area. Due to the propagation attenuation of electromagnetic waves, an RRH usually receives relatively strong signals from only a small number of nearby users. Moreover, the transmission delay prevents the signals from far-away users to be processed. Intuitively, ignoring the signals from far-away users in general does not cause much performance loss. As shown in our previous work \cite{fan2014dynamic}, with a distance-threshold-based channel sparsification approach, a vast majority of signals over the transmission links can be ignored if we can tolerate a small degradation in the signal-to-noise-plus-interference ratio (SINR). As such, each RRH only needs to serve its nearby users whose distances to the RRH are below a certain threshold. Based on the sparsified channel matrix, \cite{fan2014dynamic} proposes an algorithm that greatly reduces the computational complexity of MMSE detection from $O(N^3)$ to $O(N^a)$, where $N$ is the total number of RRHs and $a \in (1,2]$ is a constant determined by the computation implementations. Yet, the algorithm is still not perfectly scalable, in the sense that the complexity grows faster than linear with the network size. 
\par 
\begin{figure*}[!h]
\centering
\subfigure[C-RAN architecture]
{\includegraphics[width=0.54\textwidth]{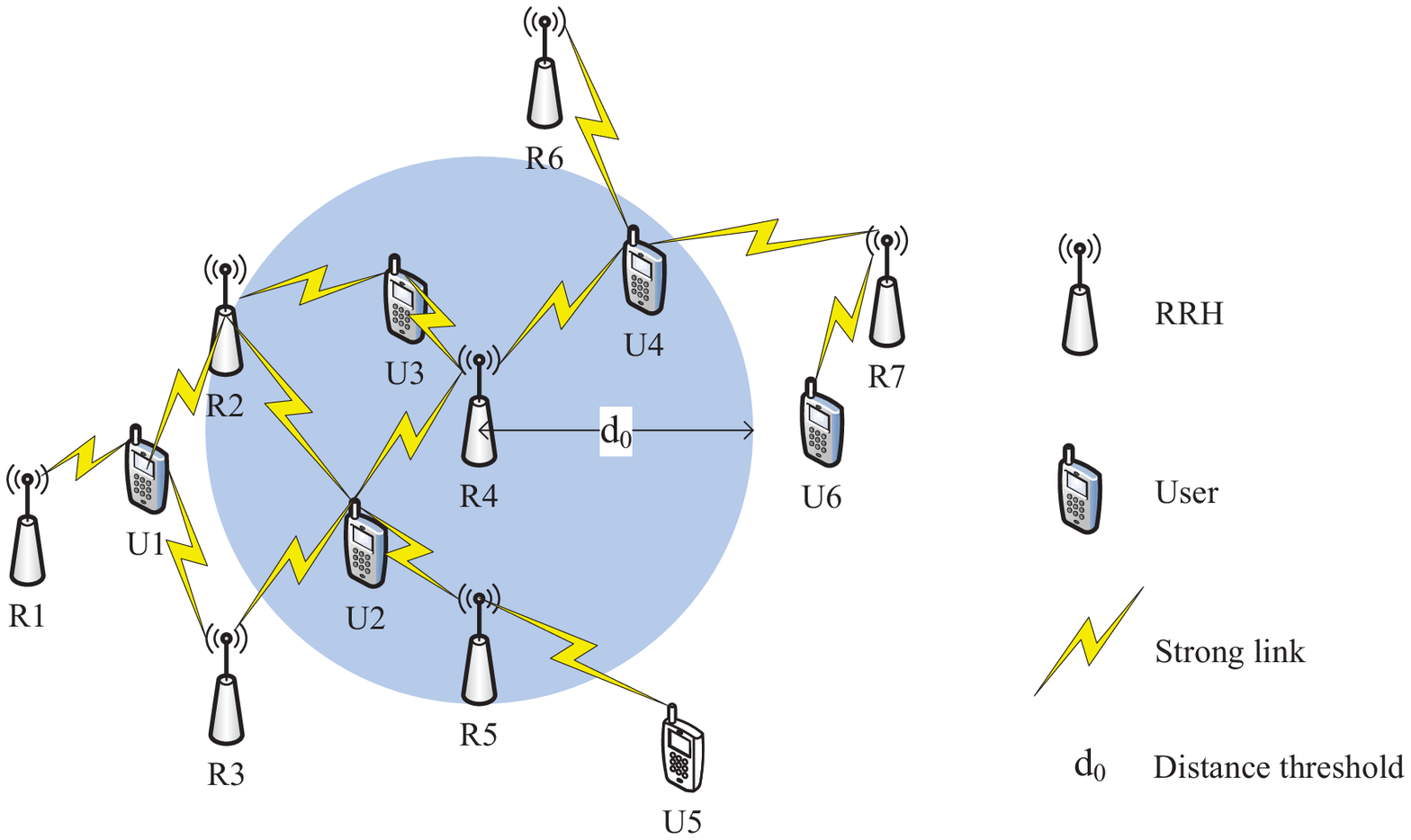}
\label{fig:sfg_1}}
\hspace{0.1in}
\subfigure[Bipartite random geometric graph]
{\includegraphics[width=0.38\textwidth]{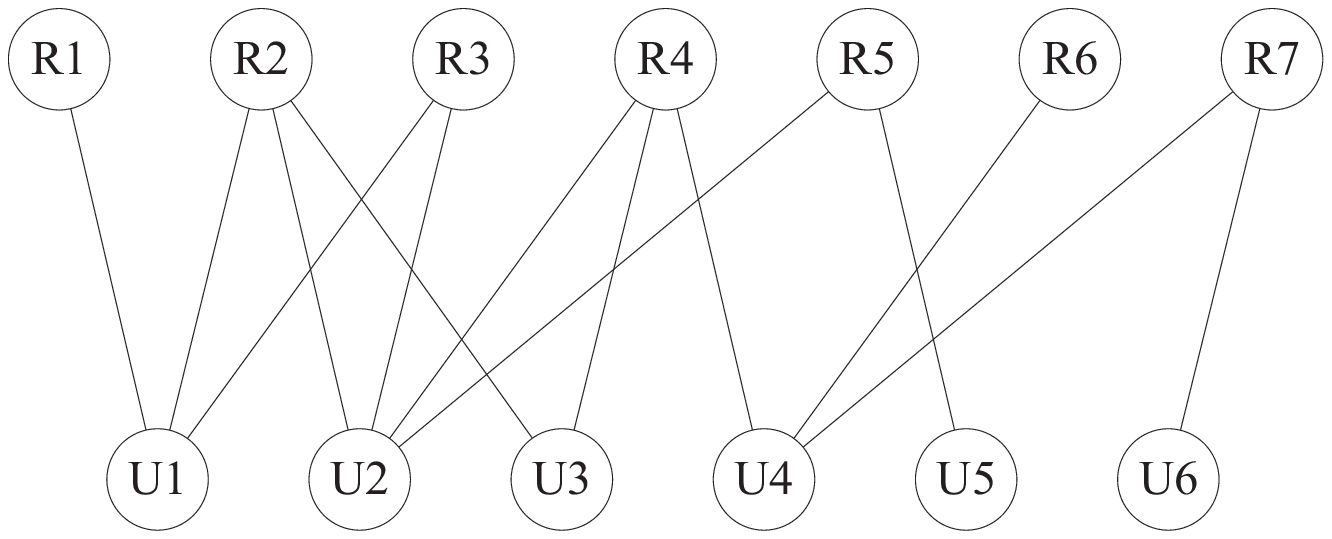}\label{fig:sfg_2}}
\caption{Graphical representation of a C-RAN}\label{fig:sfg}
%\vspace{-1em}
\end{figure*}
In this paper, we are interested in designing a perfectly scalable algorithm for joint signal detection in the uplink of C-RAN. With channel sparsification \cite{fan2014dynamic}, a C-RAN system can be represented by a bipartite random geometric graph, as shown in Fig.~\ref{fig:sfg}. Here, RRHs and users are treated as vertices/nodes, and an edge connects an RRH and a user if the distance between them does not exceed the threshold. Then, signal detection in a C-RAN is converted to a statistical inference problem over a bipartite random geometric graph, where the inference problem is to estimate the signals from unobserved nodes (i.e., user nodes) conditional on signals from observed nodes (i.e., RRH nodes).

\par 
We propose to solve the above inference problem in C-RAN using message passing (a.k.a. belief propagation), an iterative algorithm well-known for its good performance and low complexity; see, e.g., \cite{du2013network, du2013distributed,kschischang2001factor} and the references therein. In a message-passing algorithm, messages are exchanged between nodes with edge connection. Thus, the complexity of message passing is proportional to the number of edges in the network. In a C-RAN with channel sparsification, the number of messages per RRH is proportional to the number of nearby users in its neighborhood, which does not scale with the network size. Thus, the total complexity per iteration of message passing in C-RANs is linear in the network size. Unfortunately, the convergence of message passing over a bipartite random geometric graph is not ensured. The reason is that a random geometric graph is locally dense and always contains loops. It is well-known that message passing is not guaranteed to converge when the graph is loopy \cite{weiss2001correctness}. Several sufficient convergence conditions for message passing have been derived, such as diagonal dominant\cite{weiss2001correctness}, walk-summability\cite{malioutov2006walk}, and convex decomposition \cite{moallemi2009convergence}. More recently, a necessary and sufficient condition has been derived by Su and Wu in \cite{su2015convergence}. However, message passing for C-RANs is not guaranteed to satisfy any of the above-mentioned conditions. Indeed, numerical simulations indicate a non-trivial probability that the message-passing algorithm for C-RANs does not converge. 
\subsection{Contributions}
In this paper, we propose a randomized message updating schedule for GMP, to address the convergence issue of message passing over a bipartite random geometric graph. The corresponding GMP algorithm is called randomized Gaussian message passing (RGMP). Unlike conventional message passing with synchronous message updating, the RGMP algorithm updates messages serially in a random order. To the best of our knowledge, this is the first work to introduce random serial updating for GMP over a bipartite random geometric graph. Intuitively, when messages are exchanged among nodes of a loopy graph, errors may accumulate along loops, which eventually leads to the divergence of the algorithm \cite{ihler2005loopy}. Updating messages sequentially in a random order weakens the effect of loops and thus improves the convergence. The randomness of the message update schedule also simplifies the analysis, and allows us to derive a necessary and sufficient condition for the expected convergence of the RGMP algorithm. We show by both analysis and numerical results that the RGMP algorithm converges with a much higher probability than conventional message passing. Indeed, we have never observed a single case of divergence in simulation when the network size is moderately large (i.e., when the network has more than five RRHs). Our numerical results also indicate that the number of iterations of RGMP does not increase with the network size. This implies that the total computational complexity is linear with the network size. Moreover, we extend the RGMP algorithm to a blockwise RGMP (B-RGMP) algorithm, which allows parallel implementation. It is proved that when the messages are updated within two blocks, the convergence condition of B-RGMP is less stringent than that of GMP. That is, when the GMP algorithm converges, the expected output of the B-RGMP algorithm always converges. We also observe that the average computation time of the B-RGMP algorithm remains constant when the network size increases. Therefore, B-RGMP is perfectly scalable in terms of computation time. To the best of our knowledge, our paper is the first work to achieve perfect scalability in terms of computation time of joint signal detection in C-RAN systems. 
\subsection{Related Work}\label{testtest}
A C-RAN is similar to a multiuser multiple-input multiple-output (MU-MIMO) system if the cooperative RRHs are regarded as multiple antennas of a single base station. Signal processing has been extensively studied in MU-MIMO systems. However, limited research has been focused on the scalability of signal processing complexity in MU-MIMO. Moreover, the distributed locations of RRHs make the distribution of the channel matrix in C-RAN distinct from that in MU-MIMO. As such, many existing results in MU-MIMO do not hold in C-RANs. For example, \cite{liu2015low,liu2016convergence,liu2016gaussian} proved the convergence of a message-passing-based detection algorithm for massive MU-MIMO system by exploiting the law of large numbers and the random matrix theory. However, in C-RANs, channel coefficients are dependent of each other since users and RRHs are geographically related to each other. Furthermore, even if we make the assumption that the channel coefficients are independent, the existing random matrix theory still does not apply, since the channel coefficients follow a truncated heavy-tailed distribution (which is not covered in the existing random matrix theory). A widely used method to improve the convergence of message passing is the damping technique \cite{liu2016convergence,liu2016gaussian,sohn2011belief,som2010improved,moretti2014on}. With damping, an updated message is a weighted average of the message in the previous round of iteration and the mesage calculated by the original message updating rules. The weight in fact controls the trade-off between the convergence speed and the convergence probability. However, how to efficiently determine the value of the weight is still an open problem. It is also well-known that the schedule of message updating affects the convergence property of GMP \cite{Goldberger2008serial}. Ref. \cite{Goldberger2008serial} analysed the average convergence speed of random serial update schedules for loop-free factor graphs. It has been proved that GMP with random serial schedules converges about twice as fast as the conventional GMP. The schedule analysed in \cite{Goldberger2008serial} is randomly chosen and fixed in each realization instead of for each iteration. That is, the update schedule is the same for all iterations in \cite{Goldberger2008serial}. As shown in our later simulations, the convergence of serial GMP heavily depends on the update order. With a randomly picked order, the serial GMP proposed in \cite{Goldberger2008serial} does not ensure convergence. Another variant of message passing is approximate message passing (AMP). AMP was first proposed as a low-complexity iterative algorithm for compressed sensing \cite{donoho2009message}. Then, Rangan extended AMP to a general algorithm, named generalized approximate message passing (GAMP) \cite{rangan2011generalized}. However, in this paper, we show by numerical simulations that for GAMP-based signal detection in C-RANs, the number of iterations needed for convergence is roughly linear in the network size. This translates to quadratic computational complexity in total, implying that GAMP is not scalable.
\par
In our previous work \cite{fan2014dynamic}, we proposed a dynamic clustering algorithm to reduce the computational complexity of the MMSE detector. The complexity of the algorithm is reduced from cubic to no more than quadratic in the number of RRHs. In \cite{shi2015large}, Shi \textit{et al.} presented a two-stage approach to solve large-scale convex optimization problems for dense wireless cooperative networks, such as C-RANs. Matrix stuffing and alternating direction method of multipliers (ADMM) were used to speed up the computation. In addition, it was shown in \cite{sun2015on} that the expected output of randomly permuted ADMM converges to the unique solution of the optimal linear detector. In this paper we show that the ADMM algorithm converges much more slowly than the proposed RGMP algorithm when applied to large networks like C-RANs. 
%\vspace{-0.5em}
\subsection{Organization}
The rest of the paper is organized as follows. In Section II, we describe the system model. In Section III, we introduce a Gaussian message-passing algorithm with channel sparsification for signal detection in C-RANs with linear complexity per iteration, and then discuss the convergence issue. In Section IV, we propose the RGMP algorithm to address the convergence issue of Gaussian message passing. In Section V, the convergence condition of the RGMP algorithm is analysed. In Section VI, RGMP is extended to the B-RGMP algorithm, which can significantly reduce the computation time through parallel implementation. In Section VII, simulation results are demonstrated to compare RGMP and B-RGMP with other existing algorithms. Conclusions and future works are discussed in Section VIII.
\section{System Model}
In this paper we consider the uplink transmission of a C-RAN with $N$ single-antenna RRHs and $K$ single-antenna users. Suppose that both the RRHs and the users are randomly located over an area. Let $x_k$ be the signal transmitted by user $k$, and $y_n$ be the received signal at RRH $n$. Denote $\mathbf{x}=[x_1,\cdots, x_K]^T$ and $\mathbf{y}=[y_1,\cdots,y_N]^T$. Then, the received signal vector $\mathbf{y}\in \mathbb{C}^{N \times 1}$ at the RRHs is
\begin{equation}
\mathbf{y}=P^{\frac{1}{2}}\mathbf{H}\mathbf{x}+\mathbf{n},
\label{eqn:model}
\end{equation}
where $\mathbf{H} \in \mathbb{C}^{N \times K}$ denotes the channel matrix, with the $(n, k)$-th entry $H_{n,k}$ being the channel coefficient between the $k$-th user and the $n$-th RRH; $P$ is the transmission power allocated to each user; and $\mathbf{n} \sim \mathcal{CN}(\mathbf{0},N_0\mathbf{I})$ is a noise vector received by the RRHs. The transmitted signals are assumed to have zero mean and unit variance, i.e., $E[\mathbf{x}]=\mathbf{0}$ and $E[\mathbf{x}\mathbf{x}^H]=\mathbf{I}$. We further assume $H_{n,k}=\gamma_{n,k}d_{n,k}^{-\frac{\alpha}{2}}$, where $\gamma_{n,k}$ is the i.i.d. Rayleigh fading coefficient with zero mean and unit variance, $d_{n,k}$ is the distance between the $k$-th user and the $n$-th RRH, and $\alpha$ is the path loss exponent. Here, $d_{n,k}^{-\alpha}$ is the path loss from the $k$-th user to the $n$-th RRH.
\par 
In this paper, we employ linear MMSE detection to estimate the transmitted signal vector $\mathbf{x}$, with the decision statistics given by
\begin{equation}
\widehat{\mathbf{x}}=P^{\frac{1}{2}}\mathbf{H}^H(P\mathbf{H}\mathbf{H}^H +  N_0\mathbf I)^{-1}\mathbf{y}.
\label{eqn:x}
\end{equation}
In the above, the inversion of the $N\times N$ matrix $P\mathbf{H}\mathbf{H}^H +  N_0\mathbf I$ requires computational complexity of $O(N^3)$. This complexity is prohibitively high for a large-scale C-RAN with hundreds and thousands of RRHs, thus posing a serious scalability problem. In what follows, we endeavour to develop a scalable algorithm to estimate $\mathbf{x}$ by MMSE detection with complexity $O(N)$ under the assumption that $K$ grows at the same rate as $N$ (i.e., the ratio between $N$ and $K$ is fixed). In other words, the average computational complexity per RRH (or per unit network size) does not scale with $N$. 
\section{Gaussian Message Passing with Channel Sparsification}
In this section, we first describe the channel sparsification approach introduced by the authors in \cite{fan2014dynamic} to model a C-RAN as a bipartite random geometric graph. Then, we apply the Gaussian message-passing algorithm proposed in \cite{bickson2008gaussian} over bipartite random geometric graphs for signal detection.
\subsection{Channel Sparsification}
We borrow the channel sparsification approach in our recent work \cite{fan2014dynamic} to sparsify the channel matrix, as described below. The entries of $\mathbf{H}$ are discarded based on the distances between RRHs and users. Specifically, the $(n,k)$-th entry in the resulting sparsified channel matrix $\widehat{\mathbf{H}}$ is given by\\
\begin{equation}
%\small
{\widehat  H}_{n,k} =\begin{cases}{H}_{n,k},&d_{n,k}<d_0
\\0, &\text{otherwise,}
\end{cases}
\label{eqn:sparse}
\end{equation}
where $d_0$ is a distance threshold. Given the sparsified channel matrix $\widehat{\mathbf{H}}$, the received signal can be represented as
\begin{equation}
\mathbf{y}=P^{\frac{1}{2}}\widehat{\mathbf{H}}\mathbf{x}+P^{\frac{1}{2}}\widetilde{\mathbf{H}}\mathbf{x}+\mathbf{n},
\end{equation} 
where $\widetilde{\mathbf{H}}=\mathbf{H}-\widehat{\mathbf{H}}.$ The MMSE estimator of $\mathbf{x}$ is approximated by
\begin{equation}
\widehat{\mathbf{x}}\approx P^{\frac{1}{2}}\widehat{\mathbf{H}}^H(P\widehat{\mathbf{H}}\widehat{\mathbf{H}}^H +  \widehat N_0\mathbf I)^{-1}\mathbf{y},\label{eqn:app_x}
\end{equation}
with $\widehat N_0=P\mathrm{E}[\sum_{j\neq k}|\widetilde{H}_{n,j}|^2] +N_0$ for arbitrary RRH $n$.  
\par 
As proven in \cite{fan2014dynamic}, the channel matrix can be sparsified without considerably compromising the SINR. The reason is that as the RRHs and users are uniformly distributed over a large area, an RRH can only receive reasonably strong signals from a small number of nearby users, and vice versa. Therefore, the majority of the elements of $\mathbf{H}$ are relatively small in magnitude, and ignoring them in signal detection leads to marginal loss in the overall system performance. Indeed, according to \cite{fan2014dynamic}, when $N$ scales in the same order as $K$, the distance threshold $d_0$ does not increase with the network size to achieve a certain SINR performance. Thus, in this paper, we assume that $d_0$ is a predetermined constant regardless of the network size. This implies that the average number of users connecting to an RRH does not scale with the network size.
\subsection{Bipartite Random Geometric Graph}
Channel sparsification simplifies the signal detection in a C-RAN to an inference problem over a bipartite random geometric graph (see Fig.~\ref{fig:sfg}). In the bipartite random geometric graph, RRHs and users in a C-RAN are referred to RRH nodes and user nodes respectively, and edge connections exist only between RRH nodes and user nodes. More specifically, an RRH node is connected to a user node only if the distance between them falls below the threshold $d_0$, and the weight over such an edge is the channel coefficient from the corresponding user to the corresponding RRH.
\par 
Suppose that the entries in $\mathbf{x}$ follow an independent complex Gaussian distribution.\footnote{If $\mathbf{x}$ does not follow a Gaussian distribution, the message-passing algorithm presented in this paper gives an approximation of the linear MMSE estimation \cite{weiss2001correctness}.} Then, $\mathbf{y}$ and $\mathbf{x}$ are jointly Gaussian, and therefore the MMSE detector in (\ref{eqn:x}) is also the maximum \textit{a posteriori} probability (MAP) detector that maximizes the \textit{a posteriori} probability $p(\mathbf{x}|\mathbf{y})$ \cite{Vaseghi2008advanced}. That is,
\begin{equation}
%\small
\widehat{\mathbf{x}}= \arg \max p(\mathbf{x}|\mathbf{y}).
\label{eqn:xx}
\end{equation}
The probability density function $p(\mathbf{x}|\mathbf{y})$ can be factorized as
\begin{equation}
%\small
\begin{aligned}
p(\mathbf{x}|\mathbf{y}) &\propto &&p(\mathbf{y}|\mathbf{x})p(\mathbf{x})
\\
&=&&p(y_1|\mathbf{x})\cdots p(y_n|\mathbf{x})\cdots p(y_N|\mathbf{x})
\\
& 
&&\times p(x_1)\cdots p(x_k)\cdots p(x_K).
\end{aligned}
\label{eqn:pdf_full}
\end{equation}
Recall that we sparsify the channel matrix by using the channel sparsification approach given in \cite{fan2014dynamic}. Based on (\ref{eqn:app_x}), the factorization of $p(\mathbf{x}|\mathbf{y})$ is approximated as 
\begin{equation}
%\small
\begin{aligned}
p(\mathbf{x}|\mathbf{y}) &\propto &&p(\mathbf{y}|\mathbf{x})p(\mathbf{x})
\\
&\approx &&p(y_1|\mathbf{x}_{\mathcal{I}_1})\cdots p(y_n|\mathbf{x}_{\mathcal{I}_n})\cdots p(y_N|\mathbf{x}_{\mathcal{I}_N})
\\
& 
&&\times p(x_1)\cdots p(x_k)\cdots p(x_K),
\end{aligned}
\label{eqn:pdf}
\end{equation}
where $\mathbf{x}_{\mathcal{I}_n}$ contains all $x_i$ with $i \in \mathcal{I}_n$ and $\mathcal{I}_n$ is the set of user indices with $d_{n,k}<d_0$.
%\vspace{-2em}
\begin{figure}[!h]
%\vspace{-3em}
\centering
{\includegraphics[width=0.4\textwidth]{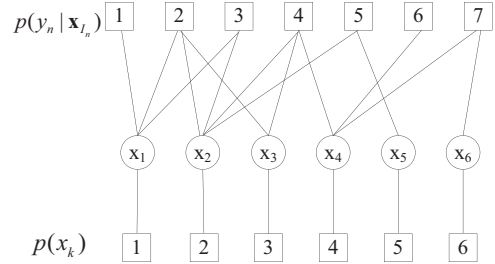}}
%\vspace{-2em}
\caption{A factor graph corresponding to the C-RAN in Fig.~\ref{fig:sfg}.}\label{fig:sfg_2}
\end{figure}
\par 
We now transfer the bipartite random geometric graph to a factor graph with the factorization in (\ref{eqn:pdf}). As illustrated in Fig.~\ref{fig:sfg_2}, a factor graph is also a bipartite graph comprising two types of nodes, namely, variable nodes (denoted by circles) and check nodes (denoted by squares), together with edges connecting these two types of nodes. The relation between the factorization (\ref{eqn:pdf}) and its associated factor graph is as follows. A check node $p(y_n|\mathbf{x}_{\mathcal{I}_n})$ is connected to a variable node $x_k$ by an edge when there is an edge connecting the $n$-th RRH node and the $k$-th user node in the corresponding random geometric graph (i.e., $d_{n,k}<d_0$), or equivalently, when the function $p(y_n|\mathbf{x}_{\mathcal{I}_n})$ takes $x_k$ as input.
\subsection{Gaussian Message Passing}
\begin{algorithm}[h]
\caption{Gaussian Message-Passing (GMP) Algorithm}
\label{alg::GMP}
%\small
{
\begin{algorithmic}[1]
\REQUIRE	
$\widehat{\mathbf{H}}$, $\mathbf{y}$
\ENSURE
$\widehat{x}_k$ for all $k$
\STATE Initial $t = 0, m_{x_k\rightarrow y_n}^{(0)}=0$, $v_{x_k\rightarrow y_n}^{(0)}=1,$ for all $k,n$
\STATE \textbf{Repeat}
\STATE Set $t \Leftarrow t+1$
\STATE For all $n,k$ such that $\widehat{{H}}_{n,k} \neq 0$, compute
%\hspace{3cm}
\begin{flalign}
&%\begin{equation}
\small
v_{y_n\rightarrow x_k}^{(t)}=\frac{1}{P{|\widehat{{H}}_{n,k}|^2}}\bigg(\widehat{{N}}_0+P\sum_{j\neq k} |\widehat{H}_{n,j}|^2v_{x_j\rightarrow y_n}^{(t-1)}\bigg)
\label{eqn:v1}
&%\end{equation}
\end{flalign}

\begin{flalign}
%\begin{equation}
%\small
%\begin{aligned}
\small
&m_{y_n\rightarrow x_k}^{(t)}=\frac{1}{P^{\frac{1}{2}}{\widehat{{H}}_{n,k}}}\bigg(y_n-P^{\frac{1}{2}}\sum_{j\neq k} \widehat{{H}}_{n,j}m_{x_j\rightarrow y_n}^{(t-1)}\bigg)
&
%\end{aligned}
\label{eqn:m1}
%\end{equation}
\end{flalign}
\begin{flalign}
%\begin{equation}
\small
&v_{x_k\rightarrow y_n}^{(t)}=\bigg( \sum_{\widehat{{H}}_{j,k}\neq 0, j\neq n} \frac{1}{v_{y_j\rightarrow x_k}^{(t)}}+1\bigg)^{-1}
\label{eqn:v2}
&%\end{equation}
\end{flalign}

\begin{flalign}
%\begin{equation}
\small
&m_{x_k\rightarrow y_n}^{(t)}=v_{x_k\rightarrow y_n}^{(t)} \sum_{\widehat{{H}}_{j,k}\neq 0, j\neq n} \frac{{m_{y_j\rightarrow x_k}^{(t)}}}{{v_{y_j\rightarrow x_k}^{(t)}}}
\label{eqn:m2}
&%\end{equation}
\end{flalign}

\STATE \textbf{Until }{\text{the stopping criterion is satisfied}}
\STATE Compute
\begin{flalign}
%\begin{equation}
\small
&v_k=\bigg( \sum_{\widehat{{H}}_{n,k}\neq 0} \frac{1}{v_{y_n\rightarrow x_k}^{(t)}}+1\bigg)^{-1}
&%\end{equation}
\end{flalign}

%\hspace*{0.5cm}
\begin{flalign}
\small
&\widehat{x}_k=v_k\sum_{\widehat{{H}}_{n,k}\neq 0} \frac{{m_{y_n\rightarrow x_k}^{(t)}}}{v_{y_n\rightarrow x_k}^{(t)}}.
&
\end{flalign}
\end{algorithmic}}
\end{algorithm}
We are now ready to introduce the Gaussian message-passing algorithm for signal detection. The algorithm will be implemented in the centralized data center. The messages, namely, the marginals of $\{x_k\}$ and $\{y_n\}$, are exchanged along the edges. In this paper, both $\{x_k\}$ and $\{y_n\}$ are Gaussian distributed, and therefore the messages are Gaussian probability density functions and can be completely characterised by mean and variance. Denote by $m^{(t)}_{y_n\rightarrow x_k}$ and $v^{(t)}_{y_n\rightarrow x_k}$ the mean and variance sent from check node $p(y_n|\mathbf{x}_{\mathcal{I}_n})$ to variable node $x_k$ at iteration $t$, respectively, and denote by $m^{(t)}_{x_k\rightarrow y_n}$ and $v^{(t)}_{x_k\rightarrow y_n}$ the mean and variance sent from variable node $x_k$ to check node $p(y_n|\mathbf{x}_{\mathcal{I}_n})$ at iteration $t$, respectively. The detailed steps of message passing are presented in Algorithm 1. We refer to this algorithm as Gaussian message passing (GMP), as all the messages involved are Gaussian marginals. Note that each RRH only serves users located in a circle with a constant radius $d_0$. Thus, the average number of messages to be exchanged and computed at each node does not scale with the network size. Therefore, the complexity per iteration of the GMP algorithm is linear in the number of RRHs and users.
\par 
In spite of its linear complexity per iteration, the GMP algorithm is not guaranteed to converge on the factor graphs induced by C-RANs. It is known that the GMP algorithm always converges to the optimal solution on a tree-type factor graph\footnote{ A tree-type graph is an undirected graph in which any two nodes are connected by exactly one path, where a path is a sequence of edges which connect a sequence of vertices without repetition.}  \cite{kschischang2001factor}. It is also known that, if a factor graph is random and sparse enough, the corresponding message-passing algorithm converges asymptotically as the network size grows to infinity \cite{richardson2001design}. However, the factor graph for a bipartite random geometric graph induced from a C-RAN is locally dense and far from being a tree. This is due to the fact that every RRH needs to simultaneously serve multiple nearby users. For example, $\{\text{R} 2, \text{U} 1, \text{R} 3, \text{U} 2\}$ in Fig.~\ref{fig:sfg} form a loop \footnote{A loop in a graph is a path that starts and ends at the same node.} of length 4. Indeed, we observe in simulations that the GMP algorithm diverges in C-RAN with a non-trivial probability. Even worse, the probability of divergence grows with the network size, as illustrated later in Fig.~\ref{fig:prob_con}. We focus on improving the convergence performance of GMP in the rest of the paper.

\begin{remark}
The GMP algorithm for a C-RAN with channel sparsification can be simply extended to the case without channel sparsification by setting the distance threshold to infinity. However, this leads to an increase of the computational complexity per iteration. We see that in each iteration of Algorithm 1, messages need to be updated on every edge of the factor graph. From channel randomness, the entries of $\mathbf{H}$ are non-zero with probability one. Thus, in the factor graph without channel sparsification, every RRH check node $p(y_n|\mathbf{x})$ is connected to all variable nodes $\{x_k\}_{k=1}^{K}$. This implies that the total number of edges in the factor graph is $NK$, implying that the complexity of the GMP algorithm is $O(NK)$ per iteration, which is unaffordable for a large-scale C-RAN. 
\end{remark}
\section{Randomized Gaussian Message Passing with Channel Sparsification}

\begin{algorithm}[h]
\caption{Randomized Gaussian Message-Passing (RGMP) Algorithm}
\label{alg::conjugateGradient}
%\small
{
\begin{algorithmic}[1]
\REQUIRE	
$\widehat{\mathbf{H}}$, $\mathbf{y}$
\ENSURE
$\widehat{x}_k$ for all $k$
\STATE initialize $t = 0, m_{x_k\rightarrow y_n}^{(0)}=0$, $v_{x_k\rightarrow y_n}^{(0)}=1,$ for all $k,n$.
\STATE \textbf{Repeat}
\STATE Set $t \Leftarrow t+1$.
\STATE Generate $K$ random variables $\sigma_t(1),\cdots, \sigma_t(K)$ from a continuous uniform distribution on interval $(0, B)$.
\STATE For $i=1,\cdots, K$, at time $\sigma_t(i)$, compute
\begin{flalign}
\small
%\begin{aligned}
\small
&v_{y_n\rightarrow x_{i}}^{(t)}
=&&\frac{1}{P{|\widehat{H}_{n,i}|^2}}\bigg(\widehat{N}_0+P\sum_{j:\sigma_t(j)< \sigma_t(i)} |\widehat{H}_{n,j}|^2 v_{x_{j)}\rightarrow y_n}^{(t)}\nonumber
\\
& &&+P\sum_{j\neq i:\sigma_t(j)\geq\sigma_t(i)} |\widehat{H}_{n,j}|^2v_{x_{j}\rightarrow y_n}^{(t-1)}\bigg)
%\end{aligned}
\label{eqn:rv1}
&
\end{flalign}
\begin{flalign}
\small
&m_{y_n\rightarrow x_{i}}^{(t)}
=&&\frac{1}{P^{\frac{1}{2}}{\widehat{H}_{n,{i}}}}\bigg(y_n-P^{\frac{1}{2}}\sum_{j:\sigma_t(j)< \sigma_t(i)} \widehat{H}_{n,j}
m_{x_{j}\rightarrow y_n}^{(t)}\nonumber
\\
& &&-P^{\frac{1}{2}}\sum_{j\neq i:\sigma_t(j)\geq \sigma_t(i)} \widehat{H}_{n,j}m_{x_{j}\rightarrow y_n}^{(t-1)}\bigg)
&\label{eqn:rm1}
\end{flalign}
\begin{flalign}
\small
&v_{x_{i}\rightarrow y_n}^{(t)}=\bigg( \sum_{\widehat{H}_{j,{i}}\neq 0, j\neq n} \frac{1}{v_{y_j\rightarrow x_{i}}^{(t)}}+1\bigg)^{-1}
&\label{eqn:rv2}
\end{flalign}
\begin{flalign}
\small
&m_{x_{i}\rightarrow y_n}^{(t)}
%\\
=v_{x_{i}\rightarrow y_n}^{(t)} \sum_{\widehat{H}_{j,{i}}\neq 0, j\neq n} \frac{{m_{y_j\rightarrow x_{i}}^{(t)}}}{{v_{y_j\rightarrow x_{i}}^{(t)}}}
&\label{eqn:rm2}
\end{flalign}
\STATE \textbf{Until }{\text{stopping criteria is satisfied}}
\STATE Compute 
\begin{flalign}
\small
&v_k=\bigg( \sum_{\widehat{H}_{n,k}\neq 0} \frac{1}{v_{y_n\rightarrow x_k}^{(t)}}+1\bigg)^{-1}&
\end{flalign}
%\hspace*{0.5cm}
\begin{flalign}
\small
&\widehat{x}_k=v_k\sum_{\widehat{H}_{n,k}\neq 0} \frac{{m_{y_n\rightarrow x_k}^{(t)}}}{{v_{y_n\rightarrow x_k}^{(t)}}}.&
\end{flalign}
\end{algorithmic}}
\end{algorithm}
\begin{figure*}[b]
\begin{equation}
\small
\mathbf{H}
=10^{-5}
\left[ \begin{array}{cccc}  
-0.1458 + 0.2401i& -2.0998 - 0.7353i&  -2.1459 - 2.0284i&   0.6130 + 2.0420i\\
  17.7199 +18.8315i&   1.8431 - 2.4183i&   5.7441 + 2.0536i&   0.4837 - 3.0383i\\
     5.1714 -14.5292i&   0.1184 - 1.5314i& -10.3012 + 0.1049i&   2.4388 - 0.8546i\\
 -25.2041 -16.2758i&   1.1697 - 0.3792i&   2.2858 - 0.2858i&   6.0425 - 2.6317i
 \end{array} \right]\label{eqn:H}
\end{equation}
\end{figure*}
\subsection{Randomized Gaussian Message Passing}
In this section, we propose the RGMP algorithm to address the convergence issue of GMP. The main novelty of the RGMP algorithm is on the scheduling strategy for message updating. The conventional GMP algorithm employs synchronous message passing, i.e., messages are updated in parallel. As aforementioned, synchronous message passing does not work well in C-RANs due to local loops in the factor graph. It is well-known that serial message passing improves the convergence performance of GMP \cite{Goldberger2008serial}. As shown in our later simulations, the convergence of GMP heavily depends on the update order. Nonetheless, there is no systematic way to derive a fixed update order that guarantees convergence. To address this issue, we propose the RGMP algorith, which updates messages in a random order.

\par 
The RGMP algorithm is described as follows. At the $t$-th iteration, $K$ random variables $\sigma_t(k)$ are generated from a continuous uniform distribution $(0,B)$. Then, the messages at the variable node $x_k$ are updated at the time $\sigma_t(k)$. For example, when $K=3$, $B=1$, at the $t$-th iteration with $\sigma_t(1)=0.623$, $\sigma_t(2)=0.307$, $\sigma_t(3)=0.890$, we first update all the messages on the edges connecting the variable node $x_2$ at time $0.307$. Then, the messages on the edges connecting the variable node $x_1$ are updated at time $0.623$. Finally, messages related to variable node $x_3$ are updated at time $0.890$. The detailed RGMP algorithm is given in Algorithm 2. 

\begin{remark}
In an ideal case when the updating and exchanging of messages does not incur any delay, each message can be updated with the up-to-date information. In this way, the RGMP algorithm updates the messages sequentially in a randomly permuted order. This is because the update time of messages is randomly generated continuous variables. With probability one, messages related to different variable nodes will not be updated at the same time. In reality, however, message updating and exchanging may cause a non-negligible delay, which means messages may not be updated with the latest information. If the update time interval $B$ is much smaller than the delay, the RGMP algorithm is equivalent to the synchronous GMP. To make the RGMP algorithm different from synchronous GMP, the time interval $B$ should be comparable to the overall delay in each iteration. Consequently, the computation time of RGMP will be significantly increased. Moreover, generating a large number of continuous random variables for creating the random schedule also introduces a non-negligible computational complexity. In Section VI, we will introduce a blockwise RGMP (B-RGMP) algorithm for parallel implementation. The total computation time of B-RGMP will not increase with the network size.
\end{remark}

\subsection{Numerical Examples}
%\vspace{-0.5em}
\begin{figure*}[!h]
\centering
\subfigure[RGMP with $B=1$]
{\includegraphics[width=0.38\textwidth]{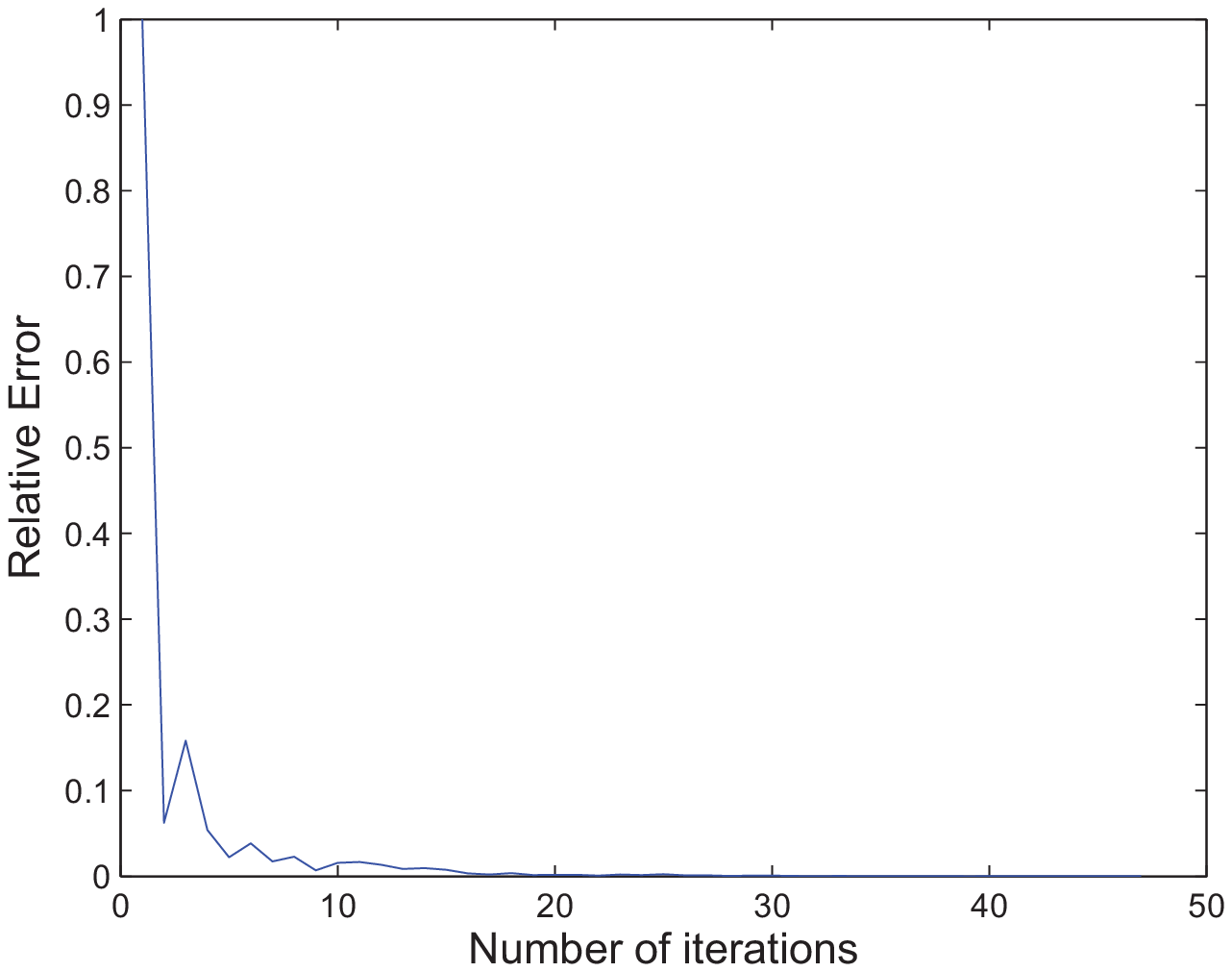}
}
\hspace{0.4in}
\subfigure[Synchronous Gaussian message passing]
{\includegraphics[width=0.38\textwidth]{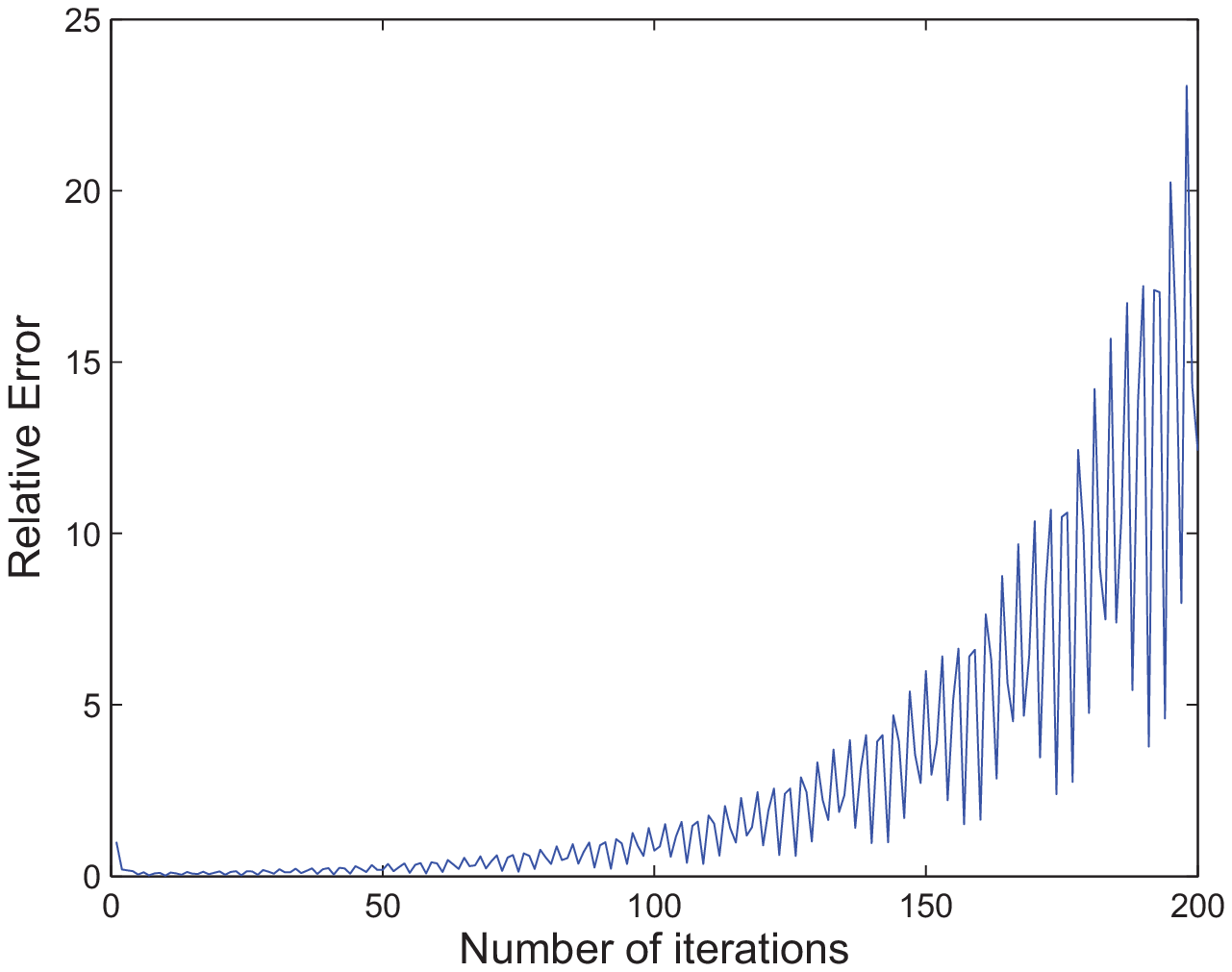}}
\hspace{0.4in}
\subfigure[Asynchronous Gaussian message passing with schedule $\sigma=(0.198,0.432,0.909,0.859)$]
{\includegraphics[width=0.38\textwidth]{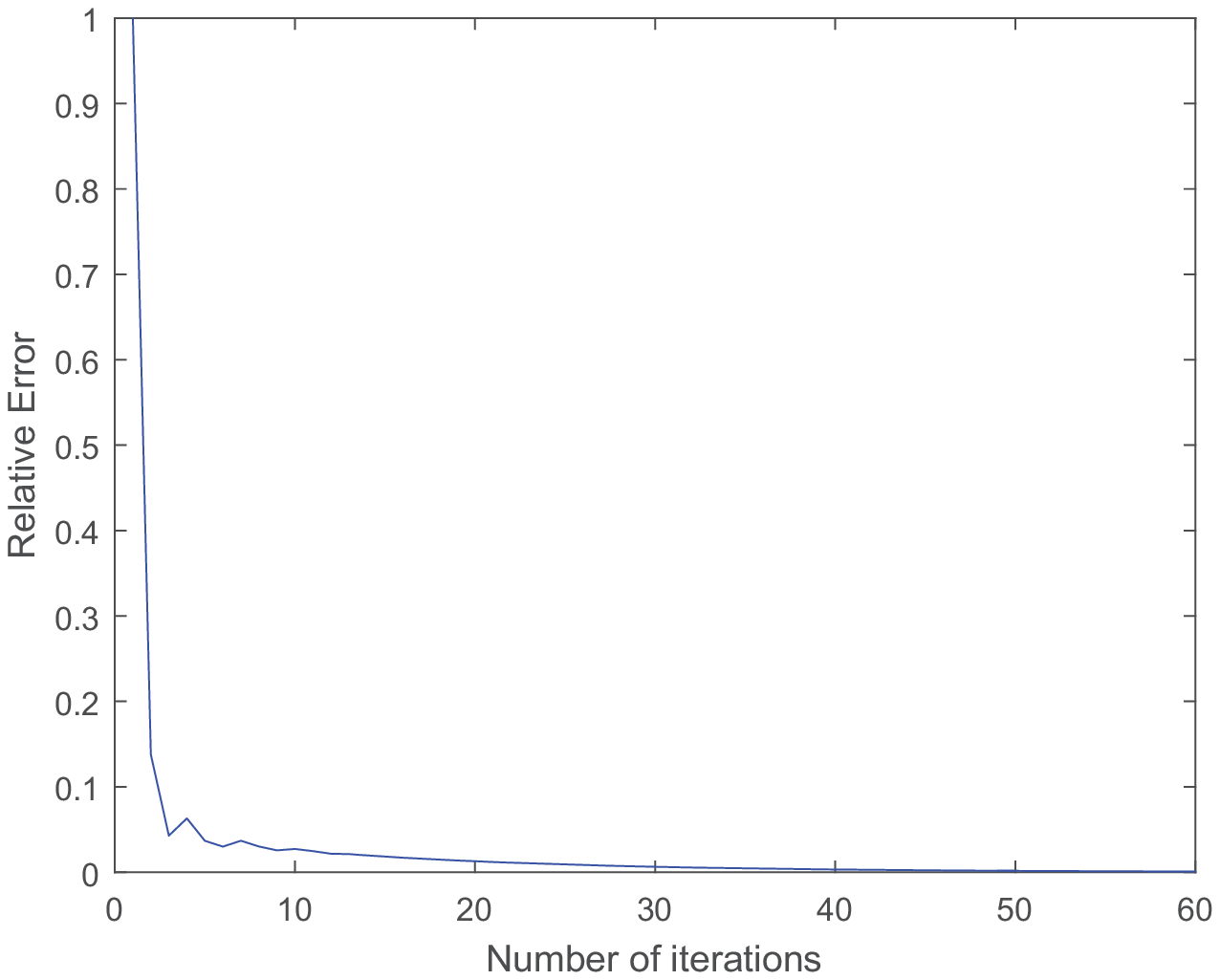}\label{fig:err_mpc}}
\hspace{0.4in}
\subfigure[Asynchronous Gaussian message passing with schedule $\sigma=(0.198,0.432,0.859,0.909)$]
{\includegraphics[width=0.38\textwidth]{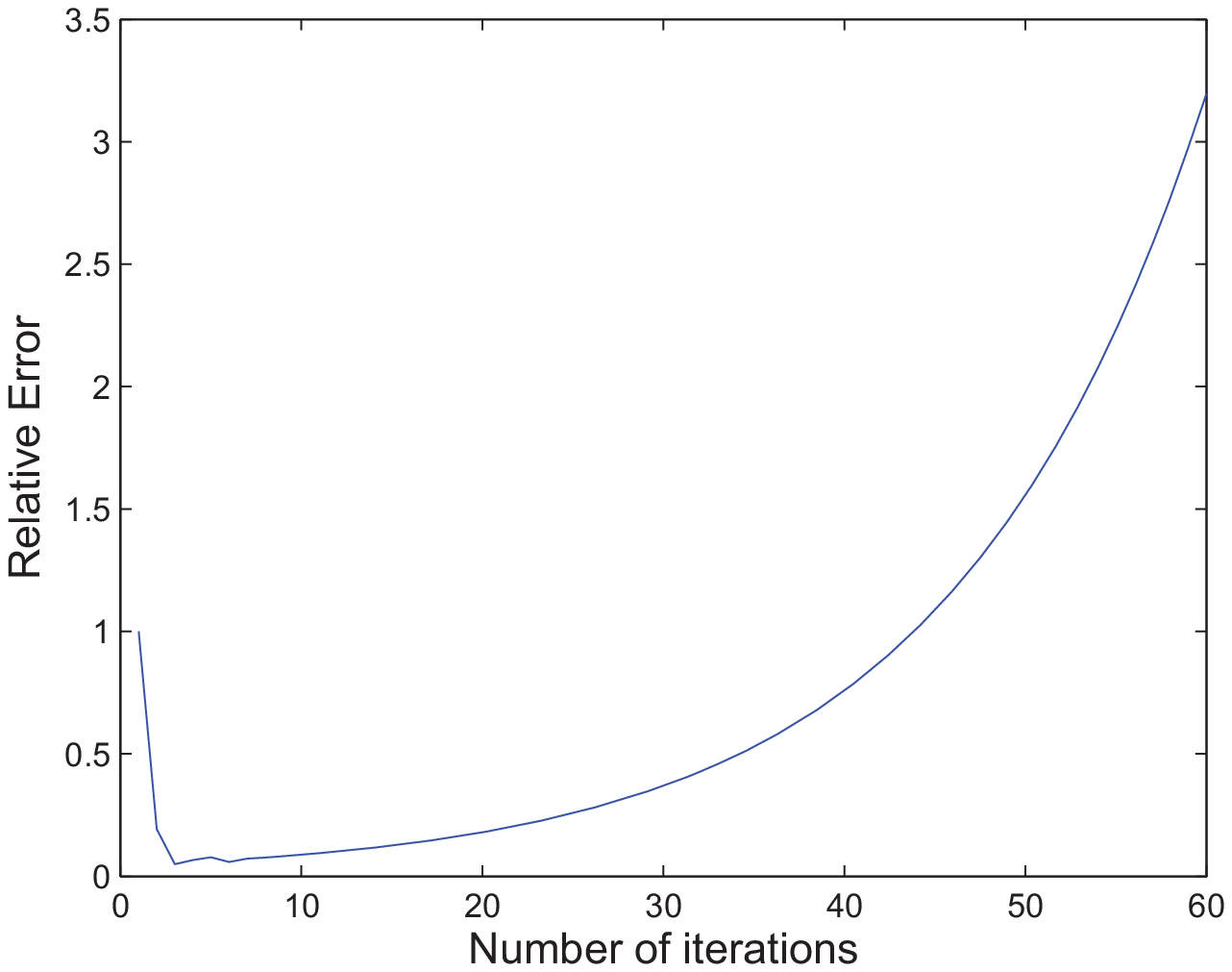}\label{fig:err_mpd}}
\caption{Relative error vs number of iterations.}\label{fig:err_mp}
%\vspace{-1em}
\end{figure*}
In this subsection, we use a simple example to illustrate the difference between our proposed RGMP algorithm and synchronous/asynchronous GMP. Consider the randomly generated channel matrix in (\ref{eqn:H}) and let the transmit SNR (i.e., $\frac{P}{N_0}$) be $100$dB. The corresponding received signal $\mathbf{y}$ is 
\begin{equation}
\small
\mathbf{y} = \begin{bmatrix}
     1.6847 - 7.1280i\\
 -20.9794 + 3.6052i\\
  -3.0214 + 3.8041i\\
  21.5306 + 6.5308i
\end{bmatrix}
\label{eqn:y}
\end{equation}
For fairness of comparison, we do not conduct channel sparsification in this example. That is, the distance threshold is set to infinity. Fig.~\ref{fig:err_mp} plots the relative error versus the number of iterations for the RGMP algorithm with $B=1$ and the GMP algorithm with different message updating strategies, i.e., synchronous update and asynchronous update with different fixed schedules, $\sigma=(0.198,0.432,0.909,0.859)$ and $\sigma=(0.198,0.432,0.859,0.909)$. The relative error is defined as $\frac{\| P\widehat{\mathbf{H}}^H\widehat{\mathbf{H}}\mathbf{x}^{(t)}-{P}^{\frac{1}{2}}\widehat{\mathbf{H}}^H\mathbf{y}\|}{\|{P}^{\frac{1}{2}}\widehat{\mathbf{H}}^H\mathbf{y}\|}$, where $\mathbf{x}^{(t)}$ is the estimation of the transmitted signal after the $t$-th iteration. We see that the synchronous GMP algorithm and the asynchronous one with schedule $(0.198,0.432,0.859,0.909)$ diverge, but the asynchronous GMP with schedule $(0.198,0.432,0.909,0.859)$ and the proposed RGMP algorithm converge.
\begin{remark}
The examples in Fig.~\ref{fig:err_mpc} and \ref{fig:err_mpd} show that convergence of asynchronous GMP heavily depends on the update schedule/order. Unfortunately, there is no systematic way to derive a fixed update order that guarantees convergence. In general, finding such an update order is difficult, especially in large networks. This issue is avoided in the proposed RGMP algorithm by randomizing the update schedule instead of fixing one. Indeed, the randomization significantly weakens the loopy effect of the graph, and thus convergence is almost ensured in RGMP. 
\end{remark}
   \begin{figure}[!htb]
	\centering
	{\includegraphics[width=0.48\textwidth]{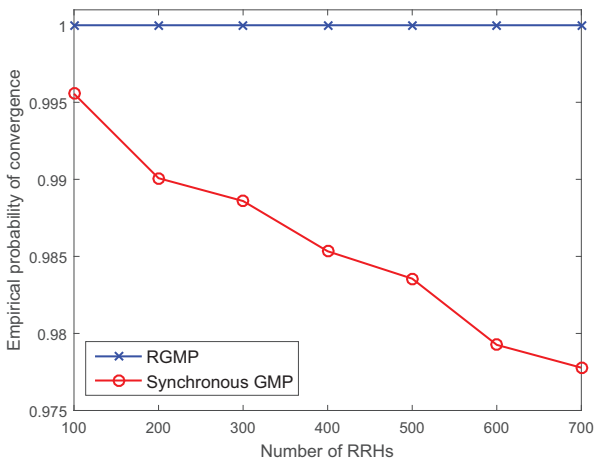}}
	\caption{Probability of convergence with $\beta_N=10/\text{km}^2$, $\beta_K=8/\text{km}^2$, and $P=95$dB.}\label{fig:prob_con}
\end{figure}
\par 
In Fig.~\ref{fig:prob_con}, we plot the empirical probability of convergence against the network size, where users and RRHs are uniformly located in a circular network area with user density $8/\text{km}^2$ and RRH density $10/\text{km}^2$. The distance threshold $d_0$ is $1000$m. For each simulated point in Fig.~\ref{fig:prob_con}, both GMP and RGMP are run for over $6000$ times that are randomized over both RRH/user location and channel fading. For GMP, the convergence probability decreases when the network size becomes large. In contrast, no divergence has been observed for the RGMP algorithm throughout our simulations.
\section{Convergence Analysis}
It is proven that the fixed point of GMP always provides the exact marginals (i.e., the solution of MMSE detection in this paper), provided that the algorithm converges \cite{weiss2001correctness}. Thus, we only need to consider the convergence of the proposed algorithm, since the algorithm always gives the true solution of MMSE detection as long as it converges. In this section, we establish a necessary and sufficient condition for the expected convergence of the proposed RGMP algorithm. For self-containedness, we start with existing results on the analysis of the convergence condition for conventional GMP. 
\subsection{Convergence of GMP}
The factor graph of a C-RAN contains loops with high probability. The convergence of GMP on a loopy factor graph has been previously studied in \cite{ng2007distributed}, with the main result summarized below. 
\par 
From Algorithm 1, we see that the evolution of the variances $v_{y_n\rightarrow x_k}$ is independent of the means $m_{y_n\rightarrow x_k}$, $m_{x_k\rightarrow y_n}$ and the received signal $\mathbf{y}$. Substituting (\ref{eqn:v2}) into (\ref{eqn:v1}), we obtain
\begin{equation}
%\small
\begin{aligned}
&v_{y_n\rightarrow x_k}^{(t)}
\\
=&\frac{\widehat{N}_0+P\sum_{j\neq k} |\widehat {H}_{n,j}|^2\bigg( \sum_{\widehat {H}_{i,j}\neq 0, i\neq n} \frac{1}{v_{y_i\rightarrow x_j}^{(t-1)}}+1\bigg)^{-1}}{P{|\widehat {H}_{n,k}|^2}}%\bigg(\bigg)
\end{aligned}.
\label{eqn:evo_v}
\end{equation}
Denote (\ref{eqn:evo_v}) in a vector form as 
\begin{equation}
\mathbf v^{(t)}=f(\mathbf v^{(t-1)}),
\end{equation}
where $f(\cdot)$ is the evolution function determined by (\ref{eqn:evo_v}), and $\mathbf{v}^{(t)}$ is a vector consisting of $v_{y_n\rightarrow x_k}^{(t)}$ for all $n$ and $k$ with $\widehat H_{n,k}\neq 0$. Note that $f(\cdot)$ is a standard function, the definition of which is given below.
\begin{definition}
A function $f(\mathbf{v})$ is standard if for all $\mathbf{v}\geq \mathbf{0}$ the following properties are satisfied.
\begin{itemize}
\item \textit{Positivity:} $f(\mathbf{v})>\mathbf{0}$.
\item \textit{Monotonicity:} If $\mathbf{v}\geq \mathbf{v}'$, then $f(\mathbf{v})\geq f(\mathbf{v}')$.
\item \textit{Scalability:} For all $\alpha>1$, $\alpha f(\mathbf{v})>f(\alpha \mathbf{v})$.
\end{itemize}
\end{definition}
Furthermore, we prove that the variances of GMP always converge to a unique fixed point in Lemma 1 \footnote{Lemma 1 was previously shown in Theorem 5.1 of \cite{ng2007distributed}, but the proof has been omitted in \cite{ng2007distributed}. Here, we include the detailed proof of Lemma 1 for self-containedness.}.
\begin{lemma}
In the GMP algorithm, if the initial point $\mathbf{v}^{(0)} > \mathbf{0}$, the sequence of $\mathbf{v}^{(t)}$ always converges to a fixed point of $f(\cdot)$ and the fixed point is unique.
\end{lemma}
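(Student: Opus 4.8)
The plan is to treat the statement as an instance of the theory of \emph{standard} interference functions: once $f$ is shown to be standard, uniqueness of the fixed point and global convergence both follow from monotonicity together with strict scalability. The first step would be to check the three bullets of the Definition for the map $f$ in \eqref{eqn:evo_v}. Positivity and monotonicity can be read off directly --- increasing any $v_{y_i\rightarrow x_j}$ decreases $1/v_{y_i\rightarrow x_j}$, hence increases each factor $\big(\sum 1/v+1\big)^{-1}$ appearing in the numerator --- while scalability holds \emph{strictly} because the additive constant $\widehat{N}_0$ in the numerator does not scale, so $f(\alpha\mathbf{v})<\alpha f(\mathbf{v})$ componentwise for every $\alpha>1$. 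Along the way I would also record the elementary bound $\big(\sum 1/v+1\big)^{-1}\in(0,1]$, which shows that, regardless of $\mathbf{v}\ge\mathbf{0}$, every component of $f(\mathbf{v})$ lies between $\widehat{N}_0/(P|\widehat{H}_{n,k}|^2)$ and $\big(\widehat{N}_0+P\sum_{j\neq k}|\widehat{H}_{n,j}|^2\big)/(P|\widehat{H}_{n,k}|^2)$. Collecting these numbers into vectors $\mathbf{a}$ and $\mathbf{b}$ with $\mathbf{0}<\mathbf{a}\le\mathbf{b}<\infty$, $f$ maps the nonnegative orthant into the compact box $\mathcal{B}=[\mathbf{a},\mathbf{b}]$ and is continuous on $\mathcal{B}$ (the denominators stay bounded away from $0$).

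Next I would produce a fixed point and prove it unique. Iterating $f$ from $\mathbf{b}$ gives $f(\mathbf{b})\le\mathbf{b}$, and monotonicity propagates this to a componentwise nonincreasing orbit bounded below by $\mathbf{a}$; iterating from $\mathbf{a}$ gives $f(\mathbf{a})\ge\mathbf{a}$ and a nondecreasing orbit bounded above by $\mathbf{b}$. Both orbits converge by monotone convergence, and by continuity of $f$ their limits $\mathbf{v}^{*}_{\max}$ and $\mathbf{v}^{*}_{\min}$ are fixed points. For uniqueness, given two fixed points $\mathbf{v}^{*},\mathbf{v}^{**}$ (necessarily in $\mathcal{B}$, hence finite and strictly positive), set $\alpha^{*}=\max\big(1,\max_{\ell} v^{**}_{\ell}/v^{*}_{\ell}\big)$, the smallest $\alpha\ge1$ with $\alpha\mathbf{v}^{*}\ge\mathbf{v}^{**}$. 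If $\alpha^{*}>1$, monotonicity and scalability give $\mathbf{v}^{**}=f(\mathbf{v}^{**})\le f(\alpha^{*}\mathbf{v}^{*})<\alpha^{*}f(\mathbf{v}^{*})=\alpha^{*}\mathbf{v}^{*}$ componentwise, which is impossible since $\alpha^{*}\mathbf{v}^{*}$ and $\mathbf{v}^{**}$ must agree in at least one coordinate; hence $\alpha^{*}=1$, i.e.\ $\mathbf{v}^{**}\le\mathbf{v}^{*}$, and symmetry yields equality. In particular $\mathbf{v}^{*}_{\max}=\mathbf{v}^{*}_{\min}=:\mathbf{v}^{*}$.

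It then remains to get convergence from an arbitrary $\mathbf{v}^{(0)}>\mathbf{0}$: after one update the orbit is already inside $\mathcal{B}$, i.e.\ $\mathbf{a}\le\mathbf{v}^{(1)}\le\mathbf{b}$, so applying $f$ repeatedly and using monotonicity sandwiches it between the two monotone orbits constructed above, $\underline{\mathbf{v}}^{(t)}\le\mathbf{v}^{(t)}\le\overline{\mathbf{v}}^{(t)}$ for all $t\ge1$, and since both bounds tend to $\mathbf{v}^{*}$ the squeeze theorem gives $\mathbf{v}^{(t)}\to\mathbf{v}^{*}$. I expect the delicate point to be not any single inequality but the bookkeeping that legitimizes this sandwich for \emph{every} positive initial vector --- in particular the observation that a single iteration forces the trajectory into the invariant compact box $\mathcal{B}$ --- together with verifying \emph{strict} scalability, which is precisely what rules out a second fixed point. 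Alternatively, the conclusion could be quoted from the general theory of standard functions, but the argument above is self-contained.
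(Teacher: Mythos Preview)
Your proposal is correct and follows essentially the same approach as the paper: both use the standard interference function framework, establishing existence of a fixed point via a monotone bounded orbit and then invoking positivity/monotonicity/scalability. The only difference is cosmetic---the paper dispatches uniqueness and global convergence by citing Yates' theorem directly and produces just one monotone orbit (from below) to exhibit a fixed point, whereas you give the self-contained argument (two monotone orbits, the scalability contradiction for uniqueness, and the sandwich for arbitrary $\mathbf{v}^{(0)}>\mathbf{0}$) that Yates' result encapsulates.
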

\begin{proof}
As proven in Theorem 2 of \cite{yates1995framework}, if a standard function has a fixed point and the initial point is positive, the algorithm always converges to a unique fixed point of the standard function. Thus, it suffices to show that $f(\cdot)$ has a fixed point. Since $\widehat{N}_0>0$, we suppose that $\mathbf{0}<\mathbf{v}^{(0)}< \frac{\widehat{N}_0}{P{|\widehat {H}_{n,k}|^2}} \cdot \mathbf{1}$ for all $\widehat {H}_{n,k} \neq 0$. Then, we can see that $\mathbf{v}^{(0)}< \mathbf{v}^{(1)}$. Consequently, we obtain $\mathbf{v}^{(0)}< \mathbf{v}^{(1)}<\cdots < \mathbf{v}^{(t-1)}< \mathbf{v}^{(t)}$. The sequence of variances is an increasing sequence. Moreover, the sequence is upper bounded by $c \cdot \mathbf{1}$, where $c$ satisfies the following conditions
\begin{equation}
c\geq \frac{1}{P{|\widehat {H}_{n,k}|^2}}\bigg(\widehat{N}_0+P\sum_{j\neq k} |\widehat {H}_{n,j}|^2\bigg), \forall \widehat {H}_{n,k} \neq 0.
\end{equation}
Thus, the sequence of variances always converges to a limit point, and the limit point is a fixed point of $f(\mathbf{v})$. This concludes the proof.
\end{proof}
\par 
We now consider the convergence of means. A vector of means, $\mathbf{m}^{(t)}$, is constructed with its $((k-1)N+n)$-th entry being 
\begin{equation}
%\small
\begin{aligned}
&m_{(k-1)N+n}^{(t)}
=&&
\begin{cases}
m_{y_n\rightarrow x_k}^{(t)},&\widehat {H}_{n,k}\neq 0,
\\0, &\text{otherwise.}
\end{cases}
\end{aligned}
\end{equation}
The recursion of the means is given by (\ref{eqn:m1}) and (\ref{eqn:m2}). As the variances always converge, the evolution of the means can be written as follows:
\begin{equation}
\mathbf{m}^{(t)}=\mathbf \Omega \mathbf{m}^{(t-1)}+\mathbf{z},
\label{eqn:evo_m}
\end{equation}
where $\mathbf{z}$ is an $NK\times 1$ vector with its $((k-1)N+n)$-th entry being 
\begin{equation}
%\small
\begin{aligned}
&z_{(k-1)N+n}
=&&
\begin{cases}
\frac{y_n}{P^{\frac{1}{2}}\widehat{H}_{n,k}},&\widehat {H}_{n,k}\neq 0,
\\0, &\text{otherwise,}
\end{cases}
\end{aligned}
\end{equation}
and $\mathbf \Omega$ is an $NK\times NK$ matrix with the $((k-1)N+n,(j-1)N+i)$-th entry being 
\begin{equation}
%\small
\begin{aligned}
&\Omega_{(k-1)N+n,(j-1)N+i}\\
=&
\begin{cases}
-\frac{\widehat {H}_{n,j}v_{x_j\rightarrow y_n}^*}{\widehat {H}_{n,k}v_{y_i\rightarrow x_j}^*},&\widehat {H}_{n,k}\neq 0, \widehat {H}_{i,j}\neq 0, n\neq i,\text{and }j\neq k,
\\0, &\text{otherwise,}
\end{cases}
\end{aligned}
\end{equation}
with $v_{x_j\rightarrow y_n}^*=\left(\sum_{\widehat{H}_{i,j}\neq 0,i\neq n}\frac{1}{v_{y_i\rightarrow x_j}^*}+1\right)^{-1}$ and $v_{y_i\rightarrow x_j}^*=\lim_{t\rightarrow \infty}v_{y_i\rightarrow x_j}^{(t)}$. Then, a necessary and sufficient condition for the convergence of (\ref{eqn:evo_m}) is given in Theorem 5.2, \cite{ng2007distributed}. That is, in Algorithm 1, the sequence of $\mathbf{m}^{(t)}$ converges to a unique fixed point if and only if the spectral radius $\rho(\mathbf \Omega)<1$.

\subsection{Convergence of RGMP}
In this subsection, we first show that the message variances always converge in the RGMP algorithm. Then, we focus on the convergence condition of the means in RGMP.
\par 
Recall that the evolution function of the variances in (\ref{eqn:evo_v}) is a standard function with a unique fixed point. As proven in \cite{yates1995framework}, if the evolution function of a synchronous algorithm is standard and feasible, then the corresponding asynchronous algorithm converges. Based on that, we obtain the following theorem.
\begin{theorem}
In the RGMP algorithm, the sequence of $v_{y_n\rightarrow x_k}^{(t)}$ always converges to the same unique fixed point as in Algorithm 1 if the initial point $\mathbf{v}>\mathbf{0}$.
\end{theorem}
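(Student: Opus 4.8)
The plan is to recognize the variance recursion of RGMP as an \emph{asynchronous} fixed‑point iteration of the very same standard interference function $f(\cdot)$ that governs the variances in GMP, and then to invoke the asynchronous‑convergence theory of \cite{yates1995framework} together with Lemma 1. First I would eliminate the variable‑to‑check variances exactly as in the derivation of (\ref{eqn:evo_v}): substituting (\ref{eqn:rv2}) into (\ref{eqn:rv1}) rewrites every freshly computed $v_{y_n\rightarrow x_i}^{(t)}$ purely in terms of the entries of the vector $\mathbf{v}$ of check‑to‑variable variances. Grouping the coordinates of $\mathbf{v}$ into blocks indexed by the variable node they point to (block $i$ being $\{v_{y_n\rightarrow x_i}:\widehat H_{n,i}\neq 0\}$), one sees that at time $\sigma_t(i)$ the RGMP rule recomputes block $i$ by applying the corresponding block of $f$ to a mixture of already‑updated blocks (those $j$ with $\sigma_t(j)<\sigma_t(i)$, using their iteration‑$t$ values) and not‑yet‑updated blocks (those with $\sigma_t(j)\geq\sigma_t(i)$, still at their iteration‑$(t-1)$ values). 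This is precisely the update model of an asynchronous iteration of $f$ with bounded delay.

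Second, I would verify the hypotheses of the asynchronous convergence result. Because the $\sigma_t(k)$ are drawn from a continuous distribution, with probability one they are pairwise distinct, so within each iteration every block is updated exactly once; hence every coordinate is updated infinitely often and the ``age'' of the information used in any update is at most one iteration. Lemma 1 already established that $f$ is a standard function possessing a (unique) fixed point, i.e. $f$ is feasible. Then, by the asynchronous convergence result of \cite{yates1995framework} quoted above Theorem 2, the asynchronous sequence generated by RGMP converges, from any positive starting point $\mathbf{v}>\mathbf{0}$, to a fixed point of $f$, independently of the particular random schedules $\{\sigma_t(\cdot)\}$ realized.

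Third, since Lemma 1 guarantees that $f$ has a \emph{unique} fixed point, the limit reached by RGMP necessarily coincides with the limit of the synchronous GMP variance recursion of Algorithm 1. This yields the claim in full, including the ``same unique fixed point'' assertion and the independence from the initialization.

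The main obstacle I anticipate is the first step: making the reduction to a bona fide asynchronous iteration rigorous. One must check that the intra‑iteration Gauss–Seidel‑type coupling (an updated block $j$ feeds, through (\ref{eqn:rv2}), into the computation of a later block $i$) is faithfully captured by the delay formalism of \cite{yates1995framework}, and that the resulting composite block map is still standard — monotonicity and scalability are inherited componentwise, but positivity and the bookkeeping of which argument is ``current'' versus ``one iteration old'' need to be stated carefully. Once the RGMP variance update is cast correctly in that framework, convergence is immediate from the cited theorem and Lemma 1.
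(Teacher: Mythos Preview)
Your proposal is correct and follows essentially the same approach as the paper: recognize the RGMP variance update as an asynchronous iteration of the standard, feasible function $f$ from Lemma~1, and then invoke the asynchronous convergence theorem of \cite{yates1995framework} to conclude convergence to the unique fixed point. In fact, the paper's own proof is a one-sentence appeal to exactly this argument, whereas you have spelled out the block structure, the bounded-delay verification, and the Gauss--Seidel coupling more carefully than the paper does.
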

\par 
With Theorem 1, it suffices to focus on the convergence condition of the means in the RGMP algorithm. Denote the update schedule at the $t$-th iteration as $\sigma_t$. Combining (\ref{eqn:rm1}) and (\ref{eqn:rm2}), we obtain the evolution of means $\mathbf{m}^{(t)}$ as\\
\begin{equation}
\mathbf{m}_{k}^{(t+1)}=\sum_{j:\sigma_t(j)<\sigma_t(k)} \mathbf \Omega_{k,j}\mathbf{m}_{j}^{(t+1)}+\sum_{j:\sigma_t(j)\geq \sigma_t(k)} \mathbf \Omega_{k,j}\mathbf{m}_{j}^{(t)}+\mathbf{z}_{k},
\label{eqn:random_mp}
\end{equation} 
where $\mathbf{m}^{(t)}_j$ is an $N\times 1$ subvector of $\mathbf{m}^{(t)}$ with the $n$-th entry being
\begin{equation}
m_j^{(t)}(n)=
\begin{cases}
m^{(t)}_{y_n\rightarrow x_j},&\widehat{H}_{n,j}\neq 0,
\\0, &\text{otherwise,}
\end{cases}
\end{equation}
and $\mathbf{z}_{k}$ is an $N\times 1$ subvector of $\mathbf{z}$ with the $n$-th entry being
\begin{equation}
z_k(n)=
\begin{cases}
\frac{y_n}{P^{\frac{1}{2}}\widehat{H}_{n,k}},&\widehat{H}_{n,k}\neq 0,
\\0, &\text{otherwise.}
\end{cases}
\end{equation}
$\mathbf \Omega_{k,j}$ is the $N\times N$ evolution matrix from user $j$ to user $k$ with the $(n,m)$-th entry being
\begin{equation}
%\small
\begin{aligned}
&\mathbf{\Omega}_{k,j}(n,m)\\
=&
\begin{cases}
-\frac{\widehat {H}_{n,j}v_{x_{j}\rightarrow y_n}^*}{\widehat {H}_{n,k}v^*_{y_m\rightarrow x_{j}}},&\widehat {H}_{n,k}\neq 0, \text{and } \widehat {H}_{n,j}\neq 0, \text{and } n\neq m,
\\0, &\text{otherwise.}
\end{cases}
\end{aligned}
\end{equation}
$\mathbf \Omega_{k,j}$ is the $(k,j)$-th submatrix of $\mathbf{\Omega}$. More specifically, 
\begin{equation}
\mathbf{\Omega}=
\begin{bmatrix}
     \mathbf{0} &\mathbf{\Omega}_{1,2} &\cdots&\mathbf{\Omega}_{1,K-1}&\mathbf{\Omega}_{1,K} \\[0.6em]
      \mathbf{\Omega}_{2,1} &\mathbf{0} &\mathbf{\Omega}_{2,3} &\cdots &\mathbf{\Omega}_{2,K}\\[0.6em]
    \vdots    &\ddots &\ddots & \ddots&\vdots\\[0.6em]
       \mathbf{\Omega}_{K-1,1} &\cdots &\mathbf{\Omega}_{K-1,K-2} &\mathbf{0}&\mathbf{\Omega}_{K-1,K}\\[0.6em]
       \mathbf{\Omega}_{K,1}& \mathbf{\Omega}_{K,2}&\cdots&\mathbf{\Omega}_{K-1,K}&\mathbf{0}
     \end{bmatrix}.
\end{equation}
\par 
We can further rewrite the equation (\ref{eqn:random_mp}) as
\begin{equation}
\mathbf{L}_{\sigma_t}\mathbf{m}^{(t+1)}=\mathbf{R}_{\sigma_t}\mathbf{m}^{(t)}+\mathbf{z},
\label{eqn:evo_m_r}
\end{equation}
where $\mathbf{L}_{\sigma_t}=[\mathbf{L}_{\sigma_t}(k,j)]_{k,j}\in \mathbb{C}^{NK\times NK}$ with its $(k,j)$-th submatrix being
\begin{equation}
\mathbf{L}_{\sigma_t}(k,j)=\begin{cases}
-\mathbf \Omega_{k,j},&\sigma_t(k)>\sigma_t(j)\\
\mathbf{I},&k=j\\
\mathbf{0}, &\text{otherwise,}
	\end{cases}
	\label{eqn:L1}
\end{equation}
and $\mathbf{R}_{\sigma_t}=[\mathbf{R}_{\sigma_t}(k,j)]_{k,j}\in \mathbb{C}^{NK\times NK}$ with its $(k,j)$-th submatrix being
\begin{equation}
\mathbf{R}_{\sigma_t}(k,j)=\begin{cases}
\mathbf \Omega_{k,j},&\sigma_t(k)\leq \sigma_t(j)\\
\mathbf{0}, &\text{otherwise.}
	\end{cases}
	\label{eqn:R1}
\end{equation}
Based on the definition of $\mathbf{L}_{\sigma_t}$, the determinant of $\mathbf{L}_{\sigma_t}$ is always $1$ or $-1$. It implies that $\mathbf{L}_{\sigma_t}$ is nonsingular. Then, multiplying both sides of (\ref{eqn:evo_m_r}) by $\mathbf{L}_{\sigma_t}^{-1}$, we obtain
\begin{equation}
\mathbf{m}^{(t+1)}=\mathbf{L}_{\sigma_t}^{-1}\mathbf{R}_{\sigma_t}\mathbf{m}^{(t)}+\mathbf{L}_{\sigma_t}^{-1}\mathbf{z}.
\label{eqn:update_1}
\end{equation}
Consequently, we obtain the following condition for the convergence of the RGMP algorithm.
\begin{proposition}
For a given sequence of update schedules $(\sigma_1,\sigma_2,\cdots,\sigma_t,\cdots)$, the RGMP algorithm converges to the fixed point $(\mathbf{I}-\mathbf{\Omega})^{-1}\mathbf{z}$ if and only if $\lim_{t\rightarrow \infty}\mathbf{L}_{\sigma_t}^{-1}\mathbf{R}_{\sigma_t}\cdots\mathbf{L}_{\sigma_1}^{-1}\mathbf{R}_{\sigma_1}=\mathbf{0}$, where $\sigma_t$ is the update schedule at the $t$-th iteration.
\end{proposition}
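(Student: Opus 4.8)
The plan is to iterate the one-step recursion \eqref{eqn:update_1} and identify when the accumulated product of transition matrices drives the error to zero. First I would define $\mathbf{m}^{*}=(\mathbf{I}-\mathbf{\Omega})^{-1}\mathbf{z}$ and verify it is a fixed point of \eqref{eqn:evo_m_r}: for any schedule $\sigma$, the partition $\mathbf{L}_{\sigma}-\mathbf{R}_{\sigma}=\mathbf{I}-\mathbf{\Omega}$ holds by construction (compare \eqref{eqn:L1} and \eqref{eqn:R1} term by term, noting that for each ordered pair $(k,j)$ with $k\neq j$ exactly one of the two cases $\sigma(k)>\sigma(j)$ or $\sigma(k)\leq\sigma(j)$ applies, contributing $-\mathbf{\Omega}_{k,j}$ to $\mathbf{L}_{\sigma}$ or $\mathbf{\Omega}_{k,j}$ to $\mathbf{R}_{\sigma}$), so $\mathbf{L}_{\sigma}\mathbf{m}^{*}=\mathbf{R}_{\sigma}\mathbf{m}^{*}+(\mathbf{L}_{\sigma}-\mathbf{R}_{\sigma})\mathbf{m}^{*}=\mathbf{R}_{\sigma}\mathbf{m}^{*}+\mathbf{z}$. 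Hence $\mathbf{m}^{*}=\mathbf{L}_{\sigma_t}^{-1}\mathbf{R}_{\sigma_t}\mathbf{m}^{*}+\mathbf{L}_{\sigma_t}^{-1}\mathbf{z}$ for every $t$.

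Next, subtracting this identity from \eqref{eqn:update_1} and writing $\mathbf{e}^{(t)}=\mathbf{m}^{(t)}-\mathbf{m}^{*}$ gives $\mathbf{e}^{(t+1)}=\mathbf{L}_{\sigma_t}^{-1}\mathbf{R}_{\sigma_t}\,\mathbf{e}^{(t)}$, and unrolling yields $\mathbf{e}^{(t+1)}=\big(\prod_{s=t}^{1}\mathbf{L}_{\sigma_s}^{-1}\mathbf{R}_{\sigma_s}\big)\mathbf{e}^{(1)}$ with the product taken in the indicated (descending) order. The sufficiency direction is then immediate: if the product matrix tends to $\mathbf{0}$, then $\mathbf{e}^{(t)}\to\mathbf{0}$ regardless of the initial error, so $\mathbf{m}^{(t)}\to\mathbf{m}^{*}$. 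For necessity, I would argue that if the RGMP iteration converges to $\mathbf{m}^{*}$ for the given schedule sequence, then in particular it converges for every choice of initial vector $\mathbf{m}^{(1)}$ (the recursion and its limit are defined for arbitrary starting point), hence $\big(\prod_{s=t}^{1}\mathbf{L}_{\sigma_s}^{-1}\mathbf{R}_{\sigma_s}\big)\mathbf{e}^{(1)}\to\mathbf{0}$ for every $\mathbf{e}^{(1)}\in\mathbb{C}^{NK}$; taking $\mathbf{e}^{(1)}$ to range over a basis forces each column of the product to vanish, i.e. $\lim_{t\to\infty}\mathbf{L}_{\sigma_t}^{-1}\mathbf{R}_{\sigma_t}\cdots\mathbf{L}_{\sigma_1}^{-1}\mathbf{R}_{\sigma_1}=\mathbf{0}$.

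The one subtlety worth handling carefully is that the statement fixes the \emph{target} fixed point as $(\mathbf{I}-\mathbf{\Omega})^{-1}\mathbf{z}$, which presupposes $\mathbf{I}-\mathbf{\Omega}$ is invertible; strictly I should either note this is implicit in "converges to the fixed point" or observe that convergence of the product to $\mathbf{0}$ already forces a unique fixed point and that the fixed-point equation $(\mathbf{L}_{\sigma}-\mathbf{R}_{\sigma})\mathbf{m}^{*}=\mathbf{z}$ is exactly $(\mathbf{I}-\mathbf{\Omega})\mathbf{m}^{*}=\mathbf{z}$. I expect the main (mild) obstacle to be the bookkeeping in the necessity direction — making precise that "convergence of the algorithm" is a property of the deterministic linear recursion that must hold for all initializations, so that one may probe it with basis vectors — rather than any deep estimate; the rest is linear algebra on the telescoped product.
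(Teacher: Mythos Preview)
Your proposal is correct and follows essentially the same approach as the paper: identify $(\mathbf{I}-\mathbf{\Omega})^{-1}\mathbf{z}$ as the common fixed point via the identity $\mathbf{L}_{\sigma}-\mathbf{R}_{\sigma}=\mathbf{I}-\mathbf{\Omega}$, form the error $\mathbf{e}^{(t)}$, and telescope the recursion to the matrix product. If anything, you are slightly more careful than the paper, which states only the sufficiency direction explicitly; your basis-vector argument for necessity and your remark on the implicit invertibility of $\mathbf{I}-\mathbf{\Omega}$ fill in details the paper leaves tacit.
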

\begin{proof}
For an arbitrary update schedule $\sigma$, the fixed point of
\begin{equation}
\mathbf{m}^{(t+1)}=\mathbf{L}_{\sigma}^{-1}\mathbf{R}_{\sigma}\mathbf{m}^{(t)}+\mathbf{L}_{\sigma}^{-1}\mathbf{z}
\label{eqn:update_2}
\end{equation}
is given by
\begin{equation}
\mathbf{m}^*=(\mathbf{I}-\mathbf{L}_{\sigma}^{-1}\mathbf{R}_{\sigma})^{-1}\mathbf{L}_{\sigma}^{-1}\mathbf{z}.
\label{eqn:update_3}
\end{equation}
Substituting $\mathbf{R}_{\sigma}=\mathbf{\Omega}-\mathbf{I}+\mathbf{L}_{\sigma}$ into (\ref{eqn:update_3}), we obtain $\mathbf{m}^*=(\mathbf{I}-\mathbf{\Omega})^{-1}\mathbf{z}$. Clearly, $\mathbf{m}^*$ is independent of the choice of schedule $\sigma$. Define $\mathbf{e}^{(t)}=\mathbf{m}^{(t)}-(\mathbf{I}-\mathbf{\Omega})^{-1}\mathbf{z}$. Then, $\mathbf{e}^{(t)}=\mathbf{L}_{\sigma_t}^{-1}\mathbf{R}_{\sigma_t}\mathbf{e}^{(t-1)}$, for any iteration number $t$. By recursion, we obtain $\mathbf{e}^{(t+1)}=\mathbf{L}_{\sigma_t}^{-1}\mathbf{R}_{\sigma_t}\cdots\mathbf{L}_{\sigma_1}^{-1}\mathbf{R}_{\sigma_1}\mathbf{e}^{(0)}$. Therefore, $\mathbf{m}^{(t)} \rightarrow \mathbf{m}^*$ provided $\mathbf{L}_{\sigma_t}^{-1}\mathbf{R}_{\sigma_t}\cdots\mathbf{L}_{\sigma_1}^{-1}\mathbf{R}_{\sigma_1} \rightarrow \mathbf{0}$ as $t\rightarrow \infty$. This concludes the proof.
\end{proof}
Proposition 2 discusses the convergence condition for a given sequence of update schedules $(\sigma_1,\sigma_2,\cdots,\sigma_t,\cdots)$. To quantify the average performance of RGMP over random update schedules, we consider expected convergence in the following, where the expectation is taken over all possible schedules. Let the expected output be 
\begin{equation}
\bm \phi^{(t)}=\mathrm{E}_{\xi_{t-1}}[\mathbf{m}^{(t)}],
\end{equation}
where $\xi_t=(\sigma_1,\cdots,\sigma_t)$ is the set of the update schedules after iteration $t$. We are now ready to present a necessary and sufficient condition for the convergence of $\bm{\phi}^{(t)}$.
\begin{theorem}
The expected output $\bm \phi^{(t)}=\mathrm{E}_{\xi_{t-1}}[\mathbf m^{(t)}]$ converges to the unique point $(\mathbf{I}-\mathbf{\Omega})^{-1}\mathbf{z}$ if and only if the spectral radius $\rho(\mathbf \Lambda)<1$, where
\begin{equation}
\mathbf \Lambda\triangleq\mathrm{E}_{\sigma}[\mathbf L_{\sigma}^{-1}\mathbf R_{\sigma}]%=\frac{1}{K!}\sum_{\sigma \in \Sigma}(\mathbf L_{\sigma}^{-1}\mathbf R_{\sigma})
.\label{eqn:condition}
%=\mathbf{Q}(\mathbf \Omega-\mathbf{I})+\mathbf{I}, 
\end{equation}
%where $\mathbf{Q}=\mathrm{E}_{\sigma}[\mathbf L_{\sigma}^{-1}]$.
\end{theorem}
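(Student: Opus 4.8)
The plan is to take the expectation of the exact error recursion $\mathbf{e}^{(t+1)} = \mathbf{L}_{\sigma_{t}}^{-1}\mathbf{R}_{\sigma_{t}}\mathbf{e}^{(t)}$ established in the proof of Proposition 2, and exploit the fact that the schedules $\sigma_1,\sigma_2,\dots$ are drawn i.i.d.\ across iterations. First I would observe that $\mathbf{e}^{(t)}$ is a deterministic linear function of $\mathbf{e}^{(0)}$ and of $\xi_{t-1}=(\sigma_1,\dots,\sigma_{t-1})$; more precisely $\mathbf{e}^{(t)} = \big(\prod_{s=t-1}^{1}\mathbf{L}_{\sigma_s}^{-1}\mathbf{R}_{\sigma_s}\big)\mathbf{e}^{(0)}$. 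Since $\bm\phi^{(t)} - (\mathbf{I}-\mathbf{\Omega})^{-1}\mathbf{z} = \mathrm{E}_{\xi_{t-1}}[\mathbf{e}^{(t)}]$, and since the $\sigma_s$ are independent, the expectation of the product factorizes: $\mathrm{E}_{\xi_{t-1}}\big[\prod_{s=t-1}^{1}\mathbf{L}_{\sigma_s}^{-1}\mathbf{R}_{\sigma_s}\big] = \prod_{s=t-1}^{1}\mathrm{E}_{\sigma_s}[\mathbf{L}_{\sigma_s}^{-1}\mathbf{R}_{\sigma_s}] = \mathbf{\Lambda}^{t-1}$, because each factor has the same distribution (here $\mathbf{\Lambda}$ as defined in \eqref{eqn:condition}). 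Hence $\bm\phi^{(t)} - (\mathbf{I}-\mathbf{\Omega})^{-1}\mathbf{z} = \mathbf{\Lambda}^{t-1}\big(\bm\phi^{(1)} - (\mathbf{I}-\mathbf{\Omega})^{-1}\mathbf{z}\big)$, up to indexing of the first iterate.

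Next I would invoke the standard linear-algebra fact that for a fixed square matrix $\mathbf{\Lambda}$, the powers $\mathbf{\Lambda}^{t}\to\mathbf{0}$ as $t\to\infty$ if and only if $\rho(\mathbf{\Lambda})<1$ (via the Jordan form, or Gelfand's formula $\rho(\mathbf{\Lambda}) = \lim_t \|\mathbf{\Lambda}^t\|^{1/t}$). Applying this: if $\rho(\mathbf{\Lambda})<1$, then $\mathbf{\Lambda}^{t-1}\to\mathbf{0}$ and $\bm\phi^{(t)}\to(\mathbf{I}-\mathbf{\Omega})^{-1}\mathbf{z}$ regardless of the initialization, which also shows the limit is the claimed unique point (and that it is indeed a fixed point of the expected recursion $\bm\phi^{(t+1)} = \mathbf{\Lambda}\bm\phi^{(t)} + \mathrm{E}_\sigma[\mathbf{L}_\sigma^{-1}]\mathbf{z}$, which I should check is consistent with $(\mathbf{I}-\mathbf{\Omega})^{-1}\mathbf{z}$ using $\mathbf{R}_\sigma = \mathbf{\Omega}-\mathbf{I}+\mathbf{L}_\sigma$ as in Proposition 2). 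Conversely, if $\rho(\mathbf{\Lambda})\geq 1$, then $\mathbf{\Lambda}^{t-1}$ does not converge to $\mathbf{0}$, so there is an initial error $\mathbf{e}^{(0)}$ (e.g.\ along an eigenvector of an eigenvalue of modulus $\geq 1$) for which $\bm\phi^{(t)}$ fails to converge to the fixed point; since the initialization in Algorithm 2 is fixed, I would need to argue that the particular $\mathbf{e}^{(0)}$ induced by the algorithm's initialization also yields non-convergence, or phrase the ``only if'' as: convergence for all initializations requires $\rho(\mathbf{\Lambda})<1$.

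The main obstacle I anticipate is making the expectation-factorization step fully rigorous and handling the ``only if'' direction cleanly. For the factorization, I must be careful that $\mathbf{L}_{\sigma_s}^{-1}\mathbf{R}_{\sigma_s}$ depends only on $\sigma_s$ and not on any other randomness (true here, since $\mathbf{\Omega}$ is determined by the already-converged variances, which by Theorem~1 are deterministic limits independent of the schedule) and that the matrix entries are bounded so expectations exist (the schedule space is finite up to the ordering it induces, since $\mathbf{L}_\sigma,\mathbf{R}_\sigma$ depend only on the permutation of $\{1,\dots,K\}$ induced by $\sigma$, so $\mathbf{\Lambda}$ is a finite convex combination of the $K!$ matrices $\mathbf{L}_\pi^{-1}\mathbf{R}_\pi$ with equal weights $1/K!$). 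For the ``only if'' part, the subtlety is that $\bm\phi^{(t)}$ convergence is only claimed for the algorithm's specific initialization; I would resolve this by noting the error recursion is linear and the fixed point is unique and schedule-independent, so $\bm\phi^{(t)}\to\mathbf{m}^*$ is equivalent to $\mathbf{\Lambda}^{t-1}\mathbf{e}^{(0)}\to\mathbf{0}$ for that fixed $\mathbf{e}^{(0)}$ — and then either restrict the theorem statement to ``for every initialization'' (most natural given the spirit of the GMP result it parallels) or verify that the standard initialization $m^{(0)}_{x_k\to y_n}=0$ gives a generic $\mathbf{e}^{(0)}$ not contained in the stable invariant subspace of $\mathbf{\Lambda}$.
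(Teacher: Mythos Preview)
Your proposal is correct and follows essentially the same approach as the paper. The paper derives the expected recursion $\bm\phi^{(t+1)}=\mathbf{\Lambda}\bm\phi^{(t)}+\mathrm{E}_\sigma[\mathbf{L}_\sigma^{-1}]\mathbf{z}$ directly via the tower property and then cites a textbook result (Axelsson) for the $\rho(\mathbf{\Lambda})<1$ criterion, while you equivalently work with the expected error and factorize $\mathrm{E}[\prod_s \mathbf{L}_{\sigma_s}^{-1}\mathbf{R}_{\sigma_s}]=\mathbf{\Lambda}^{t-1}$; both routes use independence of the $\sigma_s$ and the identity $\mathbf{R}_\sigma=\mathbf{\Omega}-\mathbf{I}+\mathbf{L}_\sigma$ to pin down the fixed point, and your caveat about the ``only if'' direction and initialization is a subtlety the paper simply absorbs into the cited convergence theorem.
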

\begin{proof}
Denote $\mathbf{A}\triangleq\mathrm{E}_{\sigma}[\mathbf L_{\sigma}^{-1}]$. Based on the definition of $\phi^t$, we obtain
\begin{equation}
\begin{aligned}
\bm \phi^{(t+1)}&=&&\mathrm{E}_{\xi_{t}}[\mathbf L_{\sigma_t}^{-1}\mathbf R_{\sigma_t}\mathbf{m}^{(t)}+\mathbf{L}_{\sigma_t}^{-1}\mathbf{z}]\\
&=&&\mathrm{E}_{\sigma_t}\left[\mathrm{E}_{\xi_{t-1}}\left[\mathbf L_{\sigma_t}^{-1}\mathbf R_{\sigma_t}\mathbf{m}^{(t)}+\mathbf{L}_{\sigma_t}^{-1}\mathbf{z}\right]\right]\\
&=&&\mathbf{\Lambda}\bm \phi^{(t)}+\mathbf{A}\mathbf{z}.
\end{aligned}
\end{equation}
\par 
From Theorem 5.3 in \cite{axelsson1994iterative}, the sequence of $\bm{\phi}^{(t)}$ converges to the fixed point $(\mathbf{I}-\mathbf{\Lambda})^{-1}\mathbf{A}\mathbf{z}$ if and only if the spectral radius $\rho(\mathbf \Lambda)<1$. Then, it suffices to show that $(\mathbf{I}-\mathbf{\Lambda})^{-1}\mathbf{A}\mathbf{z}=(\mathbf{I}-\mathbf{\Omega})^{-1}\mathbf{z}$.
\par 
Note that $\mathbf{\Omega}=\mathbf{R}_{\sigma}-\mathbf{L}_{\sigma}+\mathbf{I}$. Substituting $\mathbf{R}_{\sigma}=\mathbf{\Omega}+\mathbf{L}_{\sigma}-\mathbf{I}$ into (\ref{eqn:condition}), we obtain
\begin{equation}
\begin{aligned}
\mathbf{\Lambda}&=&&\mathrm{E}_{\sigma}[\mathbf L_{\sigma}^{-1}(\mathbf{\Omega}+\mathbf{L}_{\sigma}-\mathbf{I})]\\
&=&&\mathrm{E}_{\sigma}[\mathbf L_{\sigma}^{-1}(\mathbf{\Omega}-\mathbf{I})+\mathbf{I}]\\
&=&&\mathbf{A}(\mathbf{\Omega}-\mathbf{I})+\mathbf{I}.
\end{aligned}\label{eqn:proofa}
\end{equation}
Recall that $\rho(\mathbf \Lambda)<1$, and thus $\mathbf{I}-\mathbf{\Lambda}$ is nonsingular. Together with $\mathbf{I}-\mathbf{\Lambda}=\mathbf{A}(\mathbf{I}-\mathbf{\Omega})$ from (\ref{eqn:proofa}), we see that both $\mathbf{A}$ and $\mathbf{I}-\mathbf{\Omega}$ are nonsingular. Hence, 
\begin{equation}
(\mathbf{I}-\mathbf{\Lambda})^{-1}\mathbf{A}\mathbf{z}=\left(\mathbf{A}(\mathbf{I}-\mathbf{\Omega})\right)^{-1}\mathbf{A}\mathbf{z}=(\mathbf{I}-\mathbf{\Omega})^{-1}\mathbf{z},
\end{equation} 
which concludes the proof.
\end{proof}
As illustrated later in Fig.~\ref{fig:cdf_rho}, $\rho(\mathbf \Lambda)$ is more likely to take small values than $\rho(\mathbf \Omega)$, which implies that RGMP converges with a higher probability than GMP. Indeed, we have run over $10000$ times, and $\rho(\mathbf \Lambda)<1$ for all cases.
\section{Blockwise RGMP and its Convergence Analysis}
The proposed RGMP algorithm is conceptually simple, but may be cumbersome in implementation, for that the serial message updating schedule prohibits parallel computation. In this section, we generalize RGMP to the blockwise RGMP (B-RGMP) algorithm, which is suitable for parallel message updating. We show that the B-RGMP algorithm has better convergence behavior than synchronous GMP.

\subsection{Blockwise RGMP}
In the B-RGMP algorithm, each iteration is divided into $M$ timeslots. A variable node randomly selects a timeslot for message updating in each iteration and the selection in different iterations are independent. The B-RGMP algorithm is given in Algorithm 3. 
\par 
For a fixed $M$, the computation time per iteration of the B-RGMP algorithm does not scale with the network size. The reason is that at each timeslot, the B-RGMP algorithm can be implemented in parallel by assigning the message updating of different variable nodes to different processors. Recall that the computational complexity per variable node in each timeslot remains constant when the network size increases. Hence, with a constant number of timeslots, the average computation time of B-RGMP remains constant when the network size increases.

\begin{algorithm}[h]
\caption{Blockwise Randomized Gaussian Message-Passing (B-RGMP) Algorithm}
\label{alg::rgmp}
%\small
{
\begin{algorithmic}[1]
\REQUIRE	
$\widehat{\mathbf{H}}$, $\mathbf{y}$
\ENSURE
$\widehat{x}_k$ for all $k$
\STATE Initialize $t = 0, m_{x_k\rightarrow y_n}^{(0)}=0$, $v_{x_k\rightarrow y_n}^{(0)}=1,$ for all $k,n$.
\STATE \textbf{Repeat}
\STATE Set $t \Leftarrow t+1$.
\STATE Draw $K$ times from $\{1,\cdots, M\}$ with replacement, and define the $K$ numbers as $\sigma_t(1),\cdots, \sigma_t(K)$.
\STATE For $m=1,\cdots,M$
\STATE For $k=1,\cdots, K$, and $\sigma_t(k)=m$, $\widehat{H}_{n,{k}}\neq 0$, compute
\begin{flalign}
\small
&v_{y_n\rightarrow x_{k}}^{(t)}
=&&\frac{1}{P{|\widehat{H}_{n,k}|^2}}\bigg(\widehat{N}_0+P\sum_{j:\sigma_t(j)< m} |\widehat{H}_{n,j}|^2 v_{x_{j}\rightarrow y_n}^{(t)}\nonumber
\\
& &&+P\sum_{j\neq k:\sigma_t(j)\geq m} |\widehat{H}_{n,j}|^2v_{x_{j}\rightarrow y_n}^{(t-1)}\bigg)
&\label{eqn:rv1}
\end{flalign}

\begin{flalign}
\small
&m_{y_n\rightarrow x_{k}}^{(t)}
=&&\frac{1}{P^{\frac{1}{2}}{\widehat{H}_{n,{k}}}}\bigg(y_n-P^{\frac{1}{2}}\sum_{j:\sigma_t(j)< m} \widehat{H}_{n,j}m_{x_{j}\rightarrow y_n}^{(t)}\nonumber
\\
& &&-P^{\frac{1}{2}}\sum_{j\neq k:\sigma_t(j)\geq m} \widehat{H}_{n,j}m_{x_{j}\rightarrow y_n}^{(t-1)}\bigg)
&\label{eqn:rm1}
\end{flalign}

\begin{flalign}
\small
&v_{x_{k}\rightarrow y_n}^{(t)}=\bigg( \sum_{\widehat{H}_{j,{k}}\neq 0, j\neq n} \frac{1}{v_{y_j\rightarrow x_{k}}^{(t)}}+1\bigg)^{-1}
&\label{eqn:rv2}
\end{flalign}

\begin{flalign}
\small
&m_{x_{k}\rightarrow y_n}^{(t)}
%\\
=v_{x_{k}\rightarrow y_n}^{(t)} \sum_{\widehat{H}_{j,{k}}\neq 0, j\neq n} \frac{{m_{y_j\rightarrow x_{k}}^{(t)}}}{{v_{y_j\rightarrow x_{k}}^{(t)}}}
&\label{eqn:rm2}
\end{flalign}

\STATE \textbf{Until} {\text{stopping criteria is satisfied}}
\STATE Compute 
\begin{flalign}
\small
&v_k=\bigg( \sum_{\widehat{H}_{n,k}\neq 0} \frac{1}{v_{y_n\rightarrow x_k}^{(t)}}+1\bigg)^{-1}
&
\end{flalign}

%\hspace*{0.5cm}
\begin{flalign}
\small
&\widehat{x}_k=v_k \sum_{\widehat{H}_{n,k}\neq 0} \frac{{m_{y_n\rightarrow x_k}^{(t)}}}{{v_{y_n\rightarrow x_k}^{(t)}}}.
&
\end{flalign}

\end{algorithmic}}
\end{algorithm}
\subsection{Convergence Analysis of B-RGMP}
The convergence condition of the RGMP algorithm can be readily extended to that of the B-RGMP algorithm by changing the distribution of update schedule $\sigma$ accordingly. Moreover, the B-RGMP algorithm allows us to derive a simple convergence condition for the special case with $M=2$, where in each iteration, all messages are updated within two timeslots. We show that in the special case, if the GMP algorithm converges, the expected output $\bm \phi^{(t)}$ of B-RGMP always converges, as formally stated below.
\begin{corollary}
When the number of timeslots $M=2$, the expected output $\bm \phi^{(t)}$ of the B-RGMP algorithm converges to the unique point if and only if the spectral radius $\rho(\frac{3}{4}\mathbf{\Omega}+\frac{1}{4}\mathbf{\Omega}^2)<1$.
\end{corollary}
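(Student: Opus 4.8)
The plan is to specialize Theorem 2 to the $M=2$ case and compute $\mathbf{\Lambda} = \mathrm{E}_{\sigma}[\mathbf{L}_{\sigma}^{-1}\mathbf{R}_{\sigma}]$ explicitly. When $M=2$, each variable node independently lands in timeslot $1$ or timeslot $2$, each with probability $\tfrac12$. A schedule $\sigma$ therefore partitions the index set $\{1,\dots,K\}$ into a "first" block $S$ and a "second" block $S^c$; given such a partition, the matrices $\mathbf{L}_{\sigma}$ and $\mathbf{R}_{\sigma}$ are block-triangular with respect to the ordering (first block, then second block). Concretely, for $k\in S^c$ and $j\in S$ we have $\sigma(k)>\sigma(j)$, so $\mathbf{L}_\sigma(k,j)=-\mathbf{\Omega}_{k,j}$ and $\mathbf{R}_\sigma(k,j)=\mathbf{0}$; for $k\in S$ and $j\in S^c$ we have $\sigma(k)<\sigma(j)$, so $\mathbf{L}_\sigma(k,j)=\mathbf{0}$ and $\mathbf{R}_\sigma(k,j)=\mathbf{\Omega}_{k,j}$; for $k,j$ in the same block (and $k\neq j$) we have $\sigma(k)=\sigma(j)$, hence $\mathbf{L}_\sigma(k,j)=\mathbf{0}$ and $\mathbf{R}_\sigma(k,j)=\mathbf{\Omega}_{k,j}$; and the diagonal blocks of $\mathbf{L}_\sigma$ are $\mathbf{I}$. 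Writing $\mathbf{P}_S$ for the (block) projector onto the coordinates indexed by $S$, this says $\mathbf{L}_\sigma = \mathbf{I} - \mathbf{P}_{S^c}\mathbf{\Omega}\mathbf{P}_S$ and $\mathbf{R}_\sigma = \mathbf{\Omega} - \mathbf{P}_{S^c}\mathbf{\Omega}\mathbf{P}_S$, using that $\mathbf{\Omega}$ has zero diagonal blocks.

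Next I would invert $\mathbf{L}_\sigma$. Since $\mathbf{N}_S := \mathbf{P}_{S^c}\mathbf{\Omega}\mathbf{P}_S$ satisfies $\mathbf{N}_S^2 = \mathbf{0}$ (its range lies in the $S^c$-coordinates and it annihilates them), we get $\mathbf{L}_\sigma^{-1} = \mathbf{I} + \mathbf{N}_S$. Therefore
\begin{equation}
\mathbf{L}_\sigma^{-1}\mathbf{R}_\sigma = (\mathbf{I}+\mathbf{N}_S)(\mathbf{\Omega}-\mathbf{N}_S) = \mathbf{\Omega} + \mathbf{N}_S\mathbf{\Omega} - \mathbf{N}_S - \mathbf{N}_S^2 = \mathbf{\Omega} + \mathbf{N}_S\mathbf{\Omega} - \mathbf{N}_S,
\end{equation}
again using $\mathbf{N}_S^2=\mathbf{0}$. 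Now take the expectation over the random partition $S$, i.e.\ over i.i.d.\ fair coin flips assigning each index to $S$ or $S^c$. I need $\mathrm{E}[\mathbf{N}_S] = \mathrm{E}[\mathbf{P}_{S^c}\mathbf{\Omega}\mathbf{P}_S]$ and $\mathrm{E}[\mathbf{N}_S\mathbf{\Omega}] = \mathrm{E}[\mathbf{P}_{S^c}\mathbf{\Omega}\mathbf{P}_S\mathbf{\Omega}]$, evaluated blockwise. For the $(k,j)$ block of $\mathbf{N}_S$ with $k\neq j$: the event "$k\in S^c$ and $j\in S$" has probability $\tfrac14$, so $\mathrm{E}[\mathbf{N}_S]$ has $(k,j)$-block $\tfrac14\mathbf{\Omega}_{k,j}$, i.e.\ $\mathrm{E}[\mathbf{N}_S]=\tfrac14\mathbf{\Omega}$ (diagonal blocks vanish anyway). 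For $\mathbf{N}_S\mathbf{\Omega}$, the $(k,j)$ block is $\sum_{\ell} (\mathbf{N}_S)_{k,\ell}\mathbf{\Omega}_{\ell,j} = \sum_{\ell\neq k,\, \ell\neq j} \mathbf{1}\{k\in S^c\}\mathbf{1}\{\ell\in S\}\mathbf{\Omega}_{k,\ell}\mathbf{\Omega}_{\ell,j}$; since $\ell\notin\{k,j\}$ the events $\{k\in S^c\}$ and $\{\ell\in S\}$ are independent, giving expectation $\tfrac14\mathbf{\Omega}_{k,\ell}\mathbf{\Omega}_{\ell,j}$, and summing over $\ell$ (the $\ell=k$ and $\ell=j$ terms vanish because $\mathbf{\Omega}$ has zero diagonal blocks) yields $\mathrm{E}[\mathbf{N}_S\mathbf{\Omega}]=\tfrac14\mathbf{\Omega}^2$. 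Hence
\begin{equation}
\mathbf{\Lambda} = \mathbf{\Omega} + \tfrac14\mathbf{\Omega}^2 - \tfrac14\mathbf{\Omega} = \tfrac34\mathbf{\Omega} + \tfrac14\mathbf{\Omega}^2,
\end{equation}
and Theorem 2 immediately gives that $\bm\phi^{(t)}$ converges to $(\mathbf{I}-\mathbf{\Omega})^{-1}\mathbf{z}$ if and only if $\rho(\tfrac34\mathbf{\Omega}+\tfrac14\mathbf{\Omega}^2)<1$, which is the claim.

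The only subtle point, and the step I expect to require the most care, is the bookkeeping for the diagonal blocks and the coincident-index terms: one must be careful that "$\sigma(k)=\sigma(j)$ with $k\neq j$" contributes to $\mathbf{R}_\sigma$ (not $\mathbf{L}_\sigma$), that $\mathbf{\Omega}_{k,k}=\mathbf{0}$ is used repeatedly to drop $\ell=k,j$ terms and to justify $\mathbf{N}_S^2=\mathbf{0}$, and that the independence $\mathbf{1}\{k\in S^c\}\perp\mathbf{1}\{\ell\in S\}$ genuinely requires $\ell\neq k$. Once these are handled the computation is routine. As a sanity check, the corollary claim that "$\rho(\mathbf{\Omega})<1$ implies $\rho(\tfrac34\mathbf{\Omega}+\tfrac14\mathbf{\Omega}^2)<1$" follows because every eigenvalue $\mu$ of $\mathbf{\Omega}$ maps to the eigenvalue $\tfrac34\mu+\tfrac14\mu^2 = \tfrac14\mu(3+\mu)$ of $\mathbf{\Lambda}$, and $|\mu|<1$ gives $|\tfrac14\mu(3+\mu)| < \tfrac14\cdot 1\cdot 4 = 1$; this comparison is worth stating explicitly after the corollary since it is the punchline motivating B-RGMP.
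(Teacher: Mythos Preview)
Your proof is correct and follows essentially the same approach as the paper: both reduce to Theorem~2 and compute $\mathbf{\Lambda}$ explicitly by exploiting the nilpotency of the ``strictly lower'' part, your $\mathbf{N}_S=\mathbf{P}_{S^c}\mathbf{\Omega}\mathbf{P}_S$ being precisely the paper's $\mathbf{\Omega}-\mathbf{R}_\sigma$, so that $\mathbf{N}_S^2=\mathbf{0}$ gives $\mathbf{L}_\sigma^{-1}=\mathbf{I}+\mathbf{N}_S$. The only cosmetic difference is that the paper arranges the resulting expression as $\mathbf{R}_\sigma-\mathbf{R}_\sigma\mathbf{\Omega}+\mathbf{\Omega}^2$, which is linear in the random $\mathbf{R}_\sigma$ with deterministic right factor, so the expectation follows immediately from $\mathrm{E}[\mathbf{R}_\sigma]=\tfrac34\mathbf{\Omega}$ without the separate blockwise computation of $\mathrm{E}[\mathbf{N}_S\mathbf{\Omega}]$ you carry out.
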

\begin{proof}
From Theorem 2, it suffices to show that $\mathbf{\Lambda}=\frac{3}{4}\mathbf{\Omega}+\frac{1}{4}\mathbf{\Omega}^2$. Based on (\ref{eqn:R1}), when there are only two timeslots, we obtain $(\bm \Omega-\mathbf{R}_{\sigma})^2=\mathbf{0}$. Then,
\begin{equation}
\begin{aligned}
&\mathbf L_{\sigma}^{-1}\mathbf R_{\sigma}\\
=&(\mathbf{I}-(\bm \Omega-\mathbf{R}_{\sigma}))^{-1}\mathbf{R}_{\sigma}\\
=&(\mathbf{I}+\bm \Omega-\mathbf{R}_{\sigma})\mathbf{R}_{\sigma}\\
=&\mathbf{R}_{\sigma}+(\bm \Omega-\mathbf{R}_{\sigma})(\mathbf{R}_{\sigma}-\bm \Omega+\bm \Omega)\\
=&\mathbf{R}_{\sigma}-\mathbf{R}_{\sigma}\bm \Omega+\bm \Omega^2.
\end{aligned}
\end{equation}
Since the probability of $\sigma_t(k)\leq \sigma_t(j)$ for arbitrary $k$ and $j$ is
\begin{equation}
\begin{aligned}
&P(\sigma_t(k)\leq \sigma_t(j))
\\
=&1-P(\sigma_t(k)>\sigma_t(j))
\\
=&1-\frac{2^{K-2}}{2^K}
\\
=&\frac{3}{4},
\end{aligned}
\end{equation}
the expectation of $\mathbf{R}_{\sigma}$ is $\mathrm{E}_{\sigma}[\mathbf R_{\sigma}]=\frac{3}{4}\bm \Omega$. Hence, 
\begin{equation}
\bm \Lambda=\frac{3}{4}\Omega+\frac{1}{4}\Omega^2.
\end{equation}
%
%
%Note that there are only two different permutations when $K=2$. That is, $\Sigma=\{(1,2),(2,1)\}$. For simplicity of notation, let $\sigma=(1,2)$ and $\sigma'=(2,1)$. Then, $\mathbf{L}_{\sigma}=\mathbf{I}-\mathbf{R}_{\sigma'}$. Noting $\mathbf{R}_{\sigma'}^k=\mathbf{0}$ for all $k\geq K$, we obtain $\mathbf{L}_{\sigma}^{-1}=\sum_{k=0}^{\infty}\mathbf{R}_{\sigma'}^k=\mathbf{I}+\mathbf{R}_{\sigma'}$ by using Taylor series expansion. Similarly, $\mathbf{L}_{\sigma'}^{-1}=\mathbf{I}+\mathbf{R}_{\sigma}$. Then
%\begin{equation}
%\begin{aligned}
%\mathbf{\Lambda}=&\frac{1}{2}\left((\mathbf{I}+\mathbf{R}_{\sigma'})\mathbf{R}_{\sigma}+(\mathbf{I}+\mathbf{R}_{\sigma})\mathbf{R}_{\sigma'}\right)\\
%=&\frac{1}{2}\mathbf{\Omega}+\frac{1}{2}\mathbf{\Omega}^2.
%\end{aligned}
%\end{equation}
\end{proof}
The B-RGMP algorithm with $M=2$ is not trivial since the convergence is greatly improved compared with GMP. Denote by $\lambda_1,\cdots,\lambda_{NK}$ the eigenvalues of $\mathbf{\Omega}$. Then, the eigenvalues of $\frac{3}{4}\mathbf{\Omega}+\frac{1}{4}\mathbf{\Omega}^2$ are $\frac{3}{4}\lambda_1+\frac{1}{4}\lambda_1^2,\cdots,\frac{3}{4}\lambda_{NK}+\frac{1}{4}\lambda_{NK}^2$. The spectral radius of $\mathbf{\Omega}$ and $\frac{3}{4}\mathbf{\Omega}+\frac{1}{4}\mathbf{\Omega}^2$ are $\max_i |\lambda_i|$ and $\max_i |\frac{3}{4}\lambda_i+\frac{1}{4}\lambda_i^2|$ respectively. Thus, $\rho(\frac{3}{4}\mathbf{\Omega}+\frac{1}{4}\mathbf{\Omega}^2)<1$ always holds if $\rho(\mathbf{\Omega})<1$. But the converse does not hold in general. Take the channel matrix $\mathbf H$ in (\ref{eqn:H}) as an example. The spectral radius of the corresponding $\mathbf \Omega$ is $\rho(\mathbf \Omega)=1.0287$, while $\rho(\frac{3}{4}\mathbf{\Omega}+\frac{1}{4}\mathbf{\Omega}^2)=0.9203$. This means that the expected output of B-RGMP converges while GMP diverges. Therefore, the condition for the expected convergence of the B-RGMP algorithm is less stringent than that of the synchronous message passing.

\section{Numerical Comparisons}
In this section, we compare the performance of RGMP and B-RGMP with other existing algorithms. Unless specified otherwise, we assume that both users and RRHs are uniformly at random located in a circular area with user density $\beta_K=8/\text{km}^2$ and RRH density $\beta_N=10/\text{km}^2$. The path loss exponent is $3.7$, and the average transmit SNR at the user side equals to $95$dB. That is $\frac{P}{N_0}=95$dB. Moreover, the stopping criteria is $\delta^{(t)}<\delta$, where $\delta^{(t)}$ is the relative error after the $t$-th iteration, where one iteration means all the messages are updated once. In particular, $\delta^{(t)}=\frac{\|P\mathbf{H}^H\mathbf{H}\mathbf{x}^{(t)}-{P}^{\frac{1}{2}}\mathbf{H}^H\mathbf{y}\|}{\|{P}^{\frac{1}{2}}\mathbf{H}^H\mathbf{y}\|}$, with $\mathbf{x}^{(t)}$ being the estimated transmitted signal after $t$ iteration. 
\subsection{Comparison of Convergence}

\begin{figure}[!ht]
%\vspace{-1em}
\centering
{\includegraphics[width=0.48\textwidth]{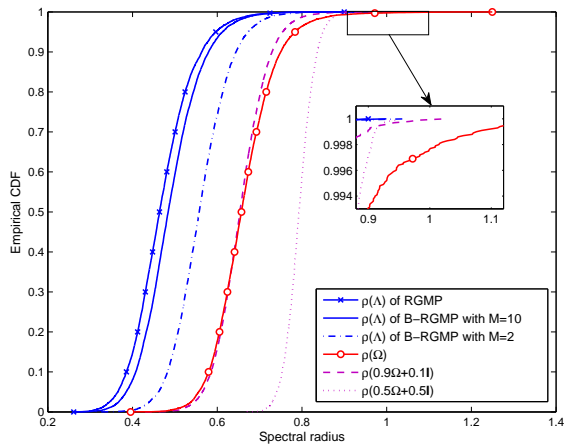}}
%\vspace{-1em}
\caption{Cumulative distribution function of the spectral radius with $N=20$, $K=15$, and $\frac{P}{N_0}=95$dB.}\label{fig:cdf_rho}
%\vspace{-1em}
\end{figure}  
In this subsection, we compare the convergence of RGMP with conventional GMP. From Lemma 1 and Theorem 1, both GMP and RGMP have guaranteed convergence for variances. However, they have different conditions to ensure the convergence of means: GMP requires $\rho(\mathbf{\Omega})<1$ while RGMP requires $\rho(\mathbf{\Lambda})<1$. Since both $\mathbf{\Omega}$ and $\mathbf{\Lambda}$ highly depend on the network geometry, it is difficult to theoretically compare these two convergence conditions. To shed light on the difference, we plot the cumulative distribution function (CDF) of the spectral radius of $\mathbf{\Omega}$ and $\mathbf{\Lambda}$ in Fig.~{\ref{fig:cdf_rho}}. We assume that users and RRHs are randomly located in a circular network area with radius $r$. The user density is $\beta_K=8/\text{km}^2$ and the RRH density is $\beta_N=10/\text{km}^2$. We see that $\rho(\mathbf{\Lambda})$ is more likely to take small values than $\rho(\mathbf{\Omega})$. This implies that RGMP converges with a higher probability than GMP. Indeed, we have run over $10000$ times for each setting, and $\rho(\mathbf{\Lambda})<1$ for all the cases. We also plot the CDF of $\rho(\bm \Lambda)$ for B-RGMP with $M=2$ and $M=10$. We see that when the number of timeslots $M$ is moderately large (e.g., $M$=10), the distributions of $\rho(\bm \Lambda)$ for RGMP and B-RGMP are quite close to each other. This implies that the B-RGMP algorithm is a reasonable generalization of RGMP without degrading the convergence performance.
\par 
In Fig.~\ref{fig:cdf_rho}, we also compare the convergence of RGMP with damping-based GMP \cite{su2015convergence,liu2016convergence}. In the GMP algorithm with damping, a message from the check nodes is a weighted average between the old message and the new message. That is, the eqn (\ref{eqn:m1}) of Algorithm 1 is replaced by the following equation
\begin{equation}
%\small
\begin{aligned}
%\small
m_{y_n\rightarrow x_k}^{(t)}
=\eta\frac{y_n-P^{\frac{1}{2}}\sum_{j\neq k} \widehat{{H}}_{n,j}m_{x_j\rightarrow y_n}^{(t-1)}}{P^{\frac{1}{2}}{\widehat{{H}}_{n,k}}}+(1-\eta) m_{y_n\rightarrow x_k}^{(t-1)},
\end{aligned}
%\label{eqn:m1}
\end{equation}
where $\eta$ is the damping factor. Consequently, the convergence condition of GMP with damping now becomes $\rho(\eta\mathbf{\Omega}+(1-\eta)\mathbf{I})<1$. As we can see from Fig.~{\ref{fig:cdf_rho}}, when the damping factor is large (i.e., $\eta=0.9$ in Fig.~{\ref{fig:cdf_rho}}), the probability that the spectral radius exceeds $1$ is non-zero. When the damping factor is small, the spectral radius is less likely to exceed $1$. However, as shown in our later simulations, the convergence rate decreases when the damping factor descreases. There exists a trade-off between the convergence probability and the convergence rate of GMP with damping. How to efficiently determine the value of the damping factor is still an open problem.

\subsection{Comparison of Convergence Speed}
In this subsection, we compare the convergence speed of RGMP and B-RGMP with other algorithms including ADMM \cite{sun2015on}, GAMP \cite{rangan2011generalized}, GMP with damping \cite{su2015convergence}, and conjugate gradient (CG) \cite{barrett1994templates}. For a fair comparison, the channel sparsification approach with distance threshold $d_0=1000$m is adopted in all algorithms. In this way, all the algorithms have a linear per-iteration computational complexity with the network size. Thus, we only focus on the convergence speed of these algorithms.
\begin{figure}[!h]
\centering
{\includegraphics[width=0.48\textwidth]{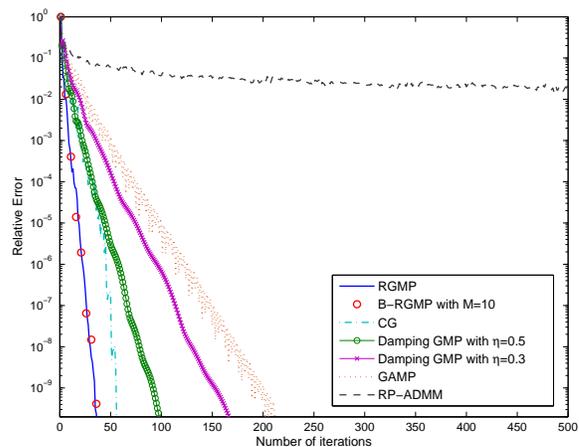}}
\caption{Relative error vs number of iterations when the number of RRHs $N=40$.}\label{fig:err_admm}
%\vspace{-1em}
\end{figure}
\par 

In Fig.~\ref{fig:err_admm}, the relative error $\delta^{(t)}$ is plotted against the number of iterations for $N=40$ and $K=32$. We see that RGMP, B-RGMP, GMP with damping, and CG converge relatively fast. For example,  the relative error of RGMP reduces to $0.001$ within $10$ iterations. However, the performance of the ADMM algorithm is unsatisfactory. Over $500$ iterations are needed for the ADMM algorithm to reduce the relative error to $0.02$. In fact, from simulation results not presented here, ADMM requires over $5000$ iterations on average to reduce the relative error to $0.001$ for the network configuration in Fig.~\ref{fig:err_admm}. Therefore, even though the computational complexity per iteration of ADMM is linear in the number of RRHs and the expected convergence is guaranteed \cite{sun2015on}, it is still impractical to adopt the ADMM algorithm in C-RAN due to the extremely slow convergence.
 \begin{figure}[!ht]
\centering
{\includegraphics[width=0.48\textwidth]{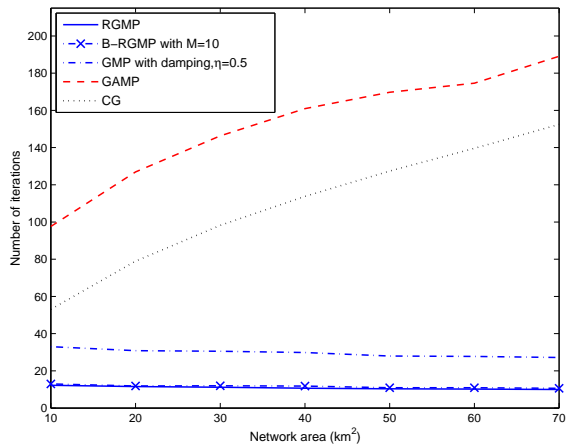}}
\caption{Convergence speed against the network size.}\label{fig:rate_size}
%\vspace{-1em}
\end{figure}
\par 
In Fig.~\ref{fig:rate_size}, we plot the convergence speed of both RGMP and B-RGMP against the network size, where the convergence speed is measured by the critical number of iterations to achieve $\delta^{(t)}<10^{-5}$. Due to the extremely slow convergence speed of ADMM as shown in Fig.~\ref{fig:err_admm}, we ignore ADMM and only plot the convergence speed of GMP with damping (with $\delta^{(t)}<10^{-5}$) and GAMP (with $\delta^{(t)}<10^{-3}$) for comparison. We observe that the number of iterations needed by GAMP grows roughly linearly with the network size. In contrast, the convergence speeds of both RGMP, B-RGMP, and GMP with damping are constant with the network size. Note that the computational complexity per iteration of GAMP/GMP with damping/RGMP/B-RGMP is linear in the network size. Thus, the total computational complexity of both GMP with damping and the RGMP/B-RGMP algorithm is linear in the network size, while that of GAMP grows quadratically with the network size. Moreover, with parallel implementation, the computation time of B-RGMP remains constant with the network size.
\par 
We emphasize that even though its performance looks not bad in simulation, GMP with damping has several drawbacks compared with the RGMP algorithm. For example, how to efficiently determine the value of the damping factor is still an open problem. In Fig.~\ref{fig:err_admm}, we observe that the GMP with damping converges faster when the damping factor increases. Recall that the spectral radius is more likely to exceed $1$ when the damping factor increases. Indeed, there exists a trade-off between the convergence probability and the convergence speed of GMP with damping. In previous works, the damping factor is usually determined through simulations \cite{moretti2014on}. Considering the large network size of C-RAN, empirically calculating the damping factor introduces unaffordable complexity cost. A recent work \cite{su2015convergence} derived a range of the damping factors, which guarantees the convergence of GMP. The range, however, is a function of the eigenvalues of $\mathbf \Omega$, which means choosing the damping factor based on \cite{su2015convergence} still requires prohibitively high computational complexity.

%\vspace{-1em}
\subsection{Comparison of Performance}
%\vspace{-0.5em}
\begin{figure}[!h]
\centering
{\includegraphics[width=0.48\textwidth]{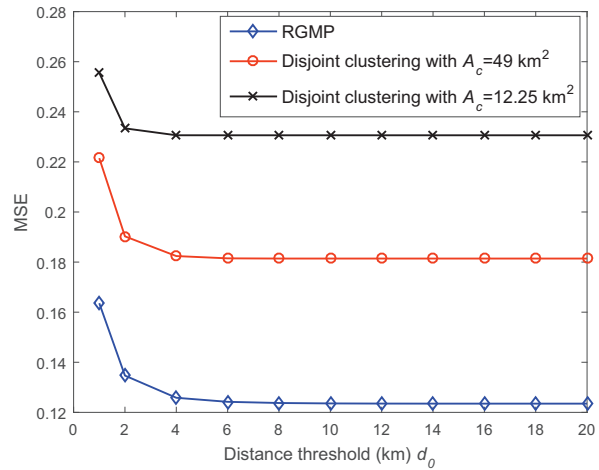}}
\caption{SINR ratio vs the distance threshold $d_0$ when the network area is 200km$^2$.}\label{fig:disjoint}
%\vspace{-1em}
\end{figure}
In this subsection, we compare the performance of the RGMP algorithm with a disjoint clustering algorithm. The disjoint clustering algorithm divides the whole network into disjoint square clusters with area $A_c$, and do MMSE detection independently in each disjoint cluster. Channel sparsification is also applied in the disjoint clustering algorithm. In Fig.~\ref{fig:disjoint}, we plot the mean squared error (MSE) against the distance threshold, where MSE refers to $\mathrm{E}[|x_k-\hat x_k|^2]$. The network area is 200km$^2$. The numbers of RRHs and users are $2000$ and $1600$, respectively. We see that the gap between the RGMP algorithm and the disjoint clustering algorithm is very large. For example, when the distance threshold $d_0$ is $4$km, the MSE of RGMP is less than $0.13$, which is only half of the MSE of the disjoint clustering algorithm with cluster area $49$km$^2$.

\section{Conclusions}
In this paper, we proposed RGMP and B-RGMP for scalable uplink signal detection in C-RANs. With channel sparsification, signal detection in a C-RAN was converted to an inference problem over a bipartite random geometric graph. A random message-update schedule was employed to address the convergence issue of GMP over a bipartite random geometric graph. We analysed the convergence condition of the proposed RGMP algorithm and showed that the convergence condition of RGMP is much less stringent than that of GMP. Numerical results demonstrated that RGMP exhibits much faster convergence than the existing algorithms, such as GAMP and ADMM. We further proposed the B-RGMP algorithm for parallel implementation. With a fixed number of timeslots, the total computation time of B-RGMP does not increase with the network size, which means B-RGMP is a perfectly scalable detection algorithm. The work in this paper sheds light on the design of message-passing algorithms on general loopy graphs, which has been a challenging topic in the field for years. Future work can be done in a number of interesting directions. For example, message passing has been applied to reduce the complexity of signal detection with constellation constraints \cite{wu2014low,som2010improved}. The convergence of these algorithms can be potentially improved by introducing randomized message updating. Moreover, RGMP can be extended to the design of uplink signal detectors with limited fronthaul capacity, as well as to the design of downlink beamforming for C-RANs. These topics are worthy of our future research endeavour.

\end{document}